\begin{document}

\newcommand{\st}[1]{\mathbf{#1}}
\newcommand{\car}{\circlearrowright}
\newcommand{\q}{\textbf{q}}
\newcommand{\p}{\textbf{p}}
\newcommand{\s}{\textbf{s}}
\newcommand{\f}{\textbf{f}}
\newcommand{\Ra}{\rightarrow}
\newcommand{\bigo}{{\mathcal O}}

\newcommand{\N}{\mathbb{N}}
\newcommand{\swe}{\textrm{SWEEP}}

\newcommand{\marked}{\varepsilon}

\title{Sweep Complexity Revisited\thanks{This work was supported by JSPS KAKENHI Grant Number JP23K10976.}}

\titlerunning{Sweep Complexity Revisited}

\author{Szil\'ard Zsolt Fazekas\inst{1}\orcidID{0001-5319-0395} \and
Robert Merca\c{s}\inst{2}\orcidID{0001-6034-433X}}

\authorrunning{S.Z. Fazekas and R. Merca\c{s}}

%\institute{Akita University, 1-1 Tegatagakuenmachi, 010-0852 Akita, Japan\\ 
\institute{Akita University, Department of Mathematical Science and Electrical-Electronic-Computer Engineering\\
\email{szilard.fazekas@ie.akita-u.ac.jp} \and
%Loughborough University, Epinal Way, Loughborough LE11 3TU, UK\\
Loughborough University, Department of Computer Science\\
\email{R.G.Mercas@lboro.ac.uk}}

\maketitle              % typeset the header of the contribution

\begin{abstract}
We study the sweep complexity of DFA in one-way jumping mode answering several questions posed earlier. This measure is the number of times in the worst case that such machines have to return to the beginning of their input after having skipped some of the symbols. The class of languages accepted by these machines strictly includes the regular class and constant sweep complexity allows exactly the acceptance of regular languages. However, we show that there exist machines with higher than constant complexity still only accepting regular languages and that in general the sweep complexity of an automaton does not distinguish between accepting regular and non-regular languages. We establish separation results for asymptotic classes defined by this complexity measure and give a surprising exponential/logarithmic relation between factors of certain inputs which can be verified by such machines.

\keywords{automata \and deterministic \and one-way jumping \and sweep complexity.}
\end{abstract}

\section{Introduction}
In roughly the last three decades, several non-classical models of automata have been introduced to study the effect of processing inputs with simple machines in a non-sequential way. Such models include restarting automata~\cite{restart}, jumping automata~\cite{JFA}, input revolving automata~\cite{revolving} and automata with translucent letters~\cite{translucent}. However, these models are either strictly more powerful or accept a class incomparable with the regular one.

One-way jumping finite automata (OWJFA) were introduced~\cite{chigahara} to study the power of deterministic finite automata (DFA) performing non-sequential processing without completely discarding structural information about the inputs \`a la jumping automata. The resulting model is, in a sense, a minimal extension of finite automata. Machines are specified in exactly the same way as DFA allowing partial transition functions. The only change is the behaviour of the machine when encountering a letter for which the current state has no outgoing transition defined. In the classical case such inputs are rejected, but in one-way jumping mode the letters are skipped temporarily to be processed later. The relative order of the skipped symbols is maintained, and the automaton moves back to the beginning after each pass (called \emph{sweep} here), seeing only the symbols previously skipped. Therefore one can also view this model as a DFA with an input tape which works as a restricted queue, or one that reads and erases symbols from a circular tape always jumping clockwise to the nearest letter for which it has a defined transition from the current state. When the transition function is complete, no symbols are skipped, so the machine behaves as ordinary DFA, which means that the class of \mbox{languages} accepted by DFA in one-way jumping mode trivially includes all regular languages.

Various properties of the accepted language class~\cite{ROWJFA} and the status of fundamental decidability questions have been settled~\cite{DJFA}. More powerful machines with this new processing mode have also been investigated, such as nondeterministic finite automata~\cite{BeierHolzerNFA,FHY21natcomp}, two-way finite automata~\cite{FHY21tcs}, pushdown automata and linear bounded automata~\cite{FHY21natcomp}. 
While the language classes defined by the models have no nontrivial closure properties under usual language operations, the accepting power and decidability issues raised some intriguing problems. 

Except for linear bounded automata, the machine models mentioned above become more powerful when they are allowed to jump to the nearest symbol readable in the current state, which is not surprising. However, it has proven challenging to get a clear picture of just how powerful the new processing mode is, even in the simplest case when one starts from DFA. 
Such automata can accept all regular languages and the language class defined by them is incomparable with the context-free class, but included in the context-sensitive class and in DTIME($n^2$)~\cite{ROWJFA}. The separation results make use of combinations of a handful of regular languages together with a very simple type of non-regular languages which contain words having letter counts in a certain ratio, e.g., the frequently used $L_{ab}=\{w\in\{a,b\}^* \mid w \textrm{ contains as many }a \textrm{'s as }b \textrm{'s}\}$ accepted by the machine $\mathcal{A}$ in Fig~\ref{fig:two} (with states $\st{1}$, or $\st{2}$ final). 
While this was enough to establish virtually all separations of interest, it left a significant gap in our understanding of the model: can such machines accept any (`interesting') non-regular languages apart from the ones which establish linear relationships among letter counts?

In this work we answer the question above, building on the investigation of sweep complexity of DFA in one-way jumping mode. Sweep count can be viewed as a measure of non-regular resources used by a machine posing the natural question of how much of this resource is needed to be able to accept non-regular languages? It has been shown that constant sweep complexity does not increase the accepting power of the machines~\cite{NCMA16} and that superconstant sweep complexity requires cycles containing `complementary deficient' states~\cite{FMW22}. In the latter paper it was conjectured that, in fact, any automaton with higher than constant sweep complexity accepts a non-regular language. In Section~\ref{sec:logregular} we refute that conjecture by exhibiting a small DFA accepting a regular language while processing some inputs of length $n$ in $\Omega(\log n)$ sweeps. We also show that there is no non-trivial upper bound on the sweep complexity of regular languages, that is, there are machines with linear complexity accepting regular languages. 

A natural question regarding the new complexity measure is whether there exists a meaningful hierarchy which does not collapse to the extremes of $\mathcal{O}(1)$ and $\mathcal{O}(n)$. The aforementioned example shows that automata with logarithmic complexity exist, which answers another question posed earlier. Furthermore, following the line of computational complexity theory, we set out to explore whether the language classes defined through asymptotic complexity form a true hierarchy, that is whether there are languages which can be accepted by a machine with $\mathcal{O}(f(n))$ complexity but not by any with $o(f(n))$ complexity, for various functions $f(n)$. In Section~\ref{sec:lognonregular} we demonstrate that such a hierarchy exists by presenting languages with $\Theta(\log n)$ and $\Theta(n)$ sweep complexity, respectively.

Finally we mention that sweep complexity as an idea has been studied in other contexts, too: an interesting and thorough investigation of a similar flavor established infinite hierarchies in terms of sweep count for iterated uniform finite transducers~\cite{iteratedfst2022}, although that model is significantly more powerful than ours, so the techniques used there do not translate here as far as we can tell.

\section{Preliminaries}
We consider words over a finite alphabet, e.g., $\Sigma=\{a,b\}$. The set of all words over $\Sigma$ is $\Sigma^*$, which includes the empty word $\varepsilon$.

A \textit{DFA} is a quintuple $M=(Q,\Sigma,R,\textbf{s},F)$, where $Q$ is the finite set of states, $\Sigma$ is the finite input alphabet, 
$\Sigma \cap Q = \emptyset$, $R: Q \times \Sigma \rightarrow Q$ is the transition function, 
$\textbf{s} \in Q$ is the start state, and $F\subseteq Q$ is the set of final states. Elements of $R$ are referred to as (transition) rules of $M$ and we write 
$\textbf{p}y \rightarrow \textbf{q} \in R$ instead of $R(\textbf{p}, y)= \textbf{q}$. A configuration of $M$ is a string in $Q \times \Sigma^{*}$. 

A DFA transitions from a configuration $\textbf{p}w$ to a configuration $\textbf{q}w'$ if $w=aw'$ and $\textbf{p}a\rightarrow \textbf{q}\in R$, with $\p,\q\in Q$, $w,w'\in \Sigma^*$ and $a\in \Sigma$. By extending the meaning of $\Ra$ we denote this by $\p w\Ra \q w'$ and the reflexive and transitive closure of $\Ra$ by $\Ra^*$. A word $w$ is \textit{accepted} by a DFA $M$ if there exists $\textbf{f}\in F$, such that $\textbf{s}w\Ra^* \textbf{f}$. The language accepted by $M$ is $\{w\in \Sigma^* \mid \exists \textbf{f}\in F: \textbf{s}w\Ra^* \textbf{f}\}$.

\begin{figure}[b]
\begin{minipage}{.45\textwidth}
\centering
        \begin{center}
\begin{tikzpicture}[scale=0.2]
\tikzstyle{every node}+=[inner sep=0pt]
\draw [black] (13,-28.7) circle (3);
\draw (13,-28.7) node {$\st{1}$};
\draw [black] (30.6,-28.7) circle (3);
\draw (30.6,-28.7) node {$\st{2}$};
\draw [black] (15.684,-27.368) arc (111.29246:68.70754:16.843);
\fill [black] (27.92,-27.37) -- (27.35,-26.61) -- (26.99,-27.54);
\draw (21.8,-25.72) node [above] {$a$};
\draw [black] (27.808,-29.79) arc (-72.87649:-107.12351:20.406);
\fill [black] (15.79,-29.79) -- (16.41,-30.5) -- (16.7,-29.55);
\draw (21.8,-31.2) node [below] {$b$};
\end{tikzpicture}
\end{center}
        \caption{The only two-state ROWJFA satisfying Lemma~\ref{nonconstantsweeps}}
        \label{fig:two}
\end{minipage}
\hfill
\begin{minipage}{.52\textwidth}
    \centering
        $
\begin{array}{r|ccccccc}
      position: & \mbox{ } 0 & \mbox{ } 1 & \mbox{ } 2 & \mbox{ } 3 & \mbox{ } 4 & \mbox{ } 5 & \mbox{ } 6 \\
     \hline
     \color{red} \textrm{input} & \color{red} \bf{a} & \color{red} \bf{d} & \color{red} \bf{c} & \color{red} \bf{b} & \color{red} \bf{c} & \color{red} \bf{b} & \color{red} \bf{a} \\
     \textrm{after sweep } 1 & \marked & \bf{d} & \bf{c} & \bf{b} & \bf{c} & \bf{b} & \marked \\
     \textrm{after sweep } 2 & \marked & \bf{d} & \bf{c} & \marked & \bf{c} & \marked & \marked\\
     \textrm{after sweep } 3 & \marked & \bf{d} & \marked & \marked & \marked & \marked & \marked\\
     \textrm{after sweep } 4 & \marked & \marked & \marked & \marked & \marked & \marked & \marked
\end{array}
$
\caption{The computation table for $adcbcba$ by the machine in Example~\ref{example1}.}
\label{fig:comp_table}
\end{minipage}
\end{figure}

\noindent
{\bf One-way jumping automata}\\
The \textit{one-way jumping relation} (denoted by $\circlearrowright$) 
between configurations from $Q \Sigma^{*}$, was originally defined in~\cite{chigahara}. Here we follow the slightly different definition of \cite{FMW22} which does not change the accepting power of the model, but is more convenient.

A tuple $M=(Q,\Sigma, R,\s,F)$ representing a \emph{deterministic right one-way jumping automaton} (ROWJFA) is  defined the same way as a DFA, where the configurations are also elements of the set $Q\times \Sigma^*$. Let $\Sigma_p = \{b\in \Sigma \mid \exists \q\in Q$  such that  $\p b\rightarrow \q \in R \}$ be the set of all of the letters from $\Sigma$ for which we have a  transition defined from state $\p$.
A jumping transition (or jump, for short), denoted $\circlearrowright$, is defined between configurations $\p ax$ and $\p xa$ if state $\p$ cannot read the letter $a$, formally:
\[\p ax \circlearrowright \p xa, \mbox{ if } a\in \Sigma \setminus \Sigma_p.\]

A ROWJFA can transition from configuration $\textbf{p}ax$ to configuration $\textbf{q}y$, which we denote by $\p ax\vdash \q y$, if 
\begin{eqnarray*}
(i) && \p ax \Ra \q y, \mbox{ where } x=y \mbox{ and } \p a\rightarrow \q\in R, \mbox{ as defined earlier, or}\\
(ii) && \p ax \circlearrowright \p xa,\mbox{ when } a\in \Sigma \setminus \Sigma_p, \p=\q \mbox{ and } xa=y.
\end{eqnarray*}

 A word $w$ is accepted by $M$ if $\s w\vdash^* \f$. The language accepted by $M$ is defined by $L(M)=\{x\in\Sigma^*\mid \exists \f\in F: \s x\vdash^* \f \}.$

While some texts define DFA having complete transition functions, our DFA allow partially defined ones. Indeed, the pairs $(\p,a)\in Q\times \Sigma$ for which no transition is defined enable the ROWJFA to perform a jump as opposed to rejecting the input as a DFA would. Hence, a ROWJFA with a complete transition function is just a DFA.\par

\textit{Sweeps} are contiguous sequences of transitions on a given input, consisting of the steps from reading or jumping over the leftmost remaining input letter to reading or jumping over the rightmost one.
If a position is jumped over, then the input symbol in that position is processed in a later sweep. 
The number of sweeps needed to process the whole input is the number of times the automaton reaches the last position of the original input word or, equivalently, one more than the maximum number of times any position is jumped over.

For an intuitive picture of sweeps, consider the computation of a ROWJFA $M$ on input $w$ as a table with rows representing the $k$ sweeps needed to process $w$ and columns representing positions in the input word. Cell $i,j$ in the table contains either a letter or a symbol representing that the letter has been \emph{read}, e.g., $\marked$. Once a letter has been marked \emph{read} and erased it stays that way, so each column is a word of the form $a^\ell \marked^{k-\ell}$ ($=a^\ell$) for some $a\in\Sigma$ and $1\leq \ell\leq k $.

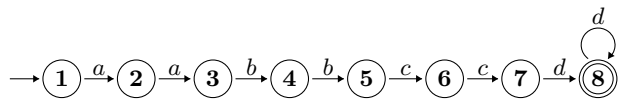
\begin{figure}[b]
\centering
\begin{center}
\begin{tikzpicture}[scale=0.1]
\tikzstyle{every node}+=[inner sep=0pt]
\draw [black] (6.5,-28.6) circle (2.5);
\draw (6.5,-28.6) node {$\st{1}$};
\draw [black] (16.4,-28.6) circle (2.5);
\draw (16.4,-28.6) node {$\st{2}$};
\draw [black] (26.6,-28.6) circle (2.5);
\draw (26.6,-28.6) node {$\st{3}$};
\draw [black] (36.8,-28.6) circle (2.5);
\draw (36.8,-28.6) node {$\st{4}$};
\draw [black] (47,-28.6) circle (2.5);
\draw (47,-28.6) node {$\st{5}$};
\draw [black] (57.4,-28.6) circle (2.5);
\draw (57.4,-28.6) node {$\st{6}$};
\draw [black] (67.5,-28.6) circle (2.5);
\draw (67.5,-28.6) node {$\st{7}$};
\draw [black] (77.8,-28.6) circle (2.5);
\draw (77.8,-28.6) node {$\st{8}$};
\draw [black] (77.8,-28.6) circle (2);
\draw [black] (-0.4,-28.6) -- (3.5,-28.6);
\fill [black] (3.5,-28.6) -- (2.7,-28.1) -- (2.7,-29.1);
\draw [black] (9.5,-28.6) -- (13.4,-28.6);
\fill [black] (13.4,-28.6) -- (12.6,-28.1) -- (12.6,-29.1);
\draw (11.45,-28.1) node [above] {$a$};
\draw [black] (19.4,-28.6) -- (23.6,-28.6);
\fill [black] (23.6,-28.6) -- (22.8,-28.1) -- (22.8,-29.1);
\draw (21.5,-28.1) node [above] {$a$};
\draw [black] (29.6,-28.6) -- (33.8,-28.6);
\fill [black] (33.8,-28.6) -- (33,-28.1) -- (33,-29.1);
\draw (31.7,-28.1) node [above] {$b$};
\draw [black] (39.8,-28.6) -- (44,-28.6);
\fill [black] (44,-28.6) -- (43.2,-28.1) -- (43.2,-29.1);
\draw (41.9,-28.1) node [above] {$b$};
\draw [black] (50,-28.6) -- (54.4,-28.6);
\fill [black] (54.4,-28.6) -- (53.6,-28.1) -- (53.6,-29.1);
\draw (52.2,-28.1) node [above] {$c$};
\draw [black] (60.4,-28.6) -- (64.5,-28.6);
\fill [black] (64.5,-28.6) -- (63.7,-28.1) -- (63.7,-29.1);
\draw (62.45,-28.1) node [above] {$c$};
\draw [black] (70.5,-28.6) -- (74.8,-28.6);
\fill [black] (74.8,-28.6) -- (74,-28.1) -- (74,-29.1);
\draw (72.65,-28.1) node [above] {$d$};
\draw [black] (76.477,-25.92) arc (234:-54:2.25);
\draw (77.8,-21.35) node [above] {$d$};
\fill [black] (79.12,-25.92) -- (80,-25.57) -- (79.19,-24.98);
\end{tikzpicture}
\end{center}
\caption{ROWJFA $M_1$ accepting all $w$ with $|w|_a=|w|_b=|w|_c=2$ and $|w|_d\geq 1$.}
\label{fig:exampleauto1}
\end{figure}

\begin{example}\label{example1}
Consider the automaton $M_1$ in Fig.~\ref{fig:exampleauto1} and the input $adcbcba$, processed in the order $aabbccd$. The ROWJFA jumps over the letter $d$ three times before processing it, hence the number of sweeps is four. Moreover, its computation table is described in Fig.~\ref{fig:comp_table}.
\end{example}

In order to be able to analyze the boundary between regular and non-regular languages accepted by the one-way jumping model, as well as to quantify the use of resources beyond the capabilities of classical DFA, when it is the case, the following complexity measure was proposed~\cite{FMW22}, which gives us the number of sweeps performed by a machine in the `worst case' for an input of length $n$.

Let $M$ be a ROWJFA and $w\in L(M)$, and let 
\[\p_0 w\vdash \p_1 w_1\vdash \p_2 w_2\vdash\dots \vdash \p_m, \mbox{ where } \p_0=s \mbox{ and } \p_m\in F,
\]
be the computation of $M$ on the input $w$. Sweep $1$ consists of $\st{p}_0w\vdash^* \st{p}_{|w|}w_{|w|}$, and we say that sweep $1$ ends in configuration $\st{p}_{|w|}w_{|w|}$. Then, for all $i\geq 1$, if sweep $i$ ends in configuration $\st{p}_{s_i} w_{s_i}$, then sweep $i+1$ is the sequence of configurations $\st{p}_{s_i}w_{s_i}\vdash^* \st{p}_{s_i+|w_{s_i}|}w_{s_i+|w_{s_i}|}$. The last sweep ends in configuration $\st{p}_m$, that is, when all input symbols have been read.
We define 
\[E(M,w)=\{\mbox{the number of sweeps performed by } M \mbox{ on } w\}.\]

When $w\notin L(M)$, then we set $E(M,w)=0$. The \emph{sweep complexity} of a machine $M$ is a function $sc_M:\N\rightarrow \N$, with $sc_M(n)$ being the maximum number of sweeps $M$ makes on processing inputs $w\in L(M)$ of length $n$, formally:
\[sc_M(n)=\max\{E(M,w) \mid w\in  \Sigma^n\}.\]
In a sense the ``most non-regular'' word (using the largest amount of non-classical resources) of each length is considered. With this in mind, we can define complexity classes in the usual manner: the class $\mathrm{SWEEP}(f(n))$ consists of languages accepted by some one-way jumping machine with sweep complexity $\mathcal{O}(f(n))$.

Observe that the sweep complexity of a machine can be defined to also take into account the sweep count of rejected words. However, this allows to `artificially' increase the sweep complexity of machines with complexity $o(n)$ without affecting regularity. Let $A$ be a machine accepting a regular language and $B$ a non-regular language with sweep complexities $f(n)$ and $g(n)$, respectively, such that $f(n)\in o(g(n))$. Then we can construct a ROWJFA accepting $aL(A)$ with sweep complexity $g(n)$ by adding a new initial state from which reading $a$ takes us to the initial state of $A$ while reading $b$ takes us to the initial state of $B$. We set all states of $B$ non-final and this way we get that on inputs starting with $b$ the machine performs $B$'s computations but never accepts anything. Moreover, $aL(A)$ is regular if and only if $L(A)$ was (see Fig.~\ref{fig:accept_rubbish}).

Each machine considered up to the point when the above measures were introduced~\cite{FMW22} had either constant or, the maximal possible, linear sweep complexity, so it seemed that there is a gap between them. Moreover, the examples with linear complexity accepted non-regular languages, while as the theorem below states, the constant complexity languages are exactly the regular languages.

\begin{theorem}[\cite{NCMA16}]\label{thm:mainReg}
ROWJFA with $\bigo(1)$ sweep complexity accept regular \mbox{languages}.
\end{theorem}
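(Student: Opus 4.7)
The plan is to prove by induction on $k$ that, for every fixed $k\geq 1$, the set of inputs accepted by a ROWJFA $M=(Q,\Sigma,R,\s,F)$ in at most $k$ sweeps is regular; the hypothesis $sc_M\in\bigo(1)$ supplies such a $k$. The central observation is that a single sweep of $M$ is realised by a deterministic finite-state transducer. Concretely, for each starting state $\q\in Q$, define $T_\q$ with the same state set as $M$, initial state $\q$, input alphabet $\Sigma$, and output alphabet $\Sigma\cup(Q\times\{\#\})$: from state $\p$ reading letter $a$, if $\p a\Ra\q'\in R$ then move to $\q'$ outputting $\varepsilon$, otherwise stay in $\p$ and output $a$; upon consuming the last input symbol while in state $\q'$, emit the end-marker $(\q',\#)$. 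The output of $T_\q$ on input $w$ is therefore exactly the residual word $w'$ consisting of the letters skipped during that sweep, tagged at the end with the state in which the sweep terminated.

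Next, for each pair $(\q,\p)\in Q^2$ and each $j\geq 0$, let $L_{\q,\p}^{(j)}$ denote the set of words that $M$, started from $\q$, processes completely and ends in state $\p$ using at most $j$ sweeps. The base case $L_{\q,\p}^{(0)}$ equals $\{\varepsilon\}$ when $\q=\p$ and is empty otherwise. For the inductive step, observe that $w\in L_{\q,\p}^{(j+1)}$ exactly when $T_\q(w)=w'\,(\q',\#)$ for some $\q'$ with $w'\in L_{\q',\p}^{(j)}$. Since deterministic transducers realise rational functions and the preimage of a regular language under a rational function is again regular (a classical fact), each $L_{\q,\p}^{(j)}$ is regular whenever the $L_{\q',\p}^{(j-1)}$ are. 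Applied at $j=k$ this yields $L(M)=\bigcup_{\f\in F}L_{\s,\f}^{(k)}$ as a finite union of regular languages, hence regular.

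The main obstacle is bookkeeping between consecutive sweeps: each sweep is determined not just by the residual word but also by the state in which the previous sweep ended, so this end-state must be propagated through the induction. Encoding it inside the transducer's output alphabet, as above, is the cleanest way to keep the argument within rational relations; an equivalent but more concrete alternative is to directly build a DFA whose states are $k$-tuples over $Q$ together with a pass index, simulating all $k$ sweeps in a single left-to-right pass by guessing at each position which sweep will finally erase that letter and verifying the guesses by the automaton's transitions. Words on which $M$ loops forever without erasing any letter during a sweep are automatically excluded, since they fail to reach $\varepsilon$ inside any $L_{\q,\p}^{(j)}$.
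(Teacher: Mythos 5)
Your proposal is correct, but note that the paper itself contains no proof of this statement: Theorem~\ref{thm:mainReg} is imported verbatim from the cited reference~\cite{NCMA16}, so there is no in-paper argument to compare against. Judged on its own merits, your argument is sound. The three pillars all hold: (1) a single sweep is indeed computed by a subsequential transducer (deterministic transducer with a final-output function emitting the end-of-sweep state), because within a sweep the machine reads or skips each remaining letter exactly once, left to right; (2) the computation composes across sweeps, since after a sweep the configuration is exactly $\q' w'$ with $w'$ the skipped letters in order, so the rest of the run is the run of $M$ on $w'$ from $\q'$ --- this justifies your recursion $L_{\q,\p}^{(j+1)}=T_\q^{-1}\bigl(\bigcup_{\q'} L_{\q',\p}^{(j)}(\q',\#)\bigr)$; and (3) inverse images of regular languages under sequential (more generally, rational) functions are regular, which for subsequential transducers is a simple product construction. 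Your final step also correctly uses the paper's convention that $E(M,w)=0$ for rejected words, so $\bigo(1)$ complexity bounds only the sweeps of \emph{accepted} words, and $L(M)=\bigcup_{\f\in F}L_{\s,\f}^{(k)}$ follows; words over which the machine cycles forever are, as you say, excluded by the recursion bottoming out at $\varepsilon$. For what it is worth, the proof in the cited source is closer to the ``more concrete alternative'' you sketch at the end --- simulating a constant number $k$ of sweeps in one pass by tracking a $k$-tuple of states (with the end-of-sweep/start-of-next-sweep consistency handled by nondeterministic guessing and verification, then determinizing) --- whereas your transducer induction is arguably cleaner and makes the compositional structure explicit; both yield the slightly stronger statement that, for every fixed $k$, the set of words processed completely within $k$ sweeps is regular.
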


The sufficient condition above was conjectured to be also necessary for regularity in general, evidenced by the known examples at that point. 

Next, we investigate the apparent gap between constant and linear complexities and show that the presumed condition above is not necessary for regularity. Our search for machines with non-constant sweep complexity is directed by the following structural lemma, which says that such machines need to have two `complementary deficient states' in a cycle.
\begin{lemma}[\cite{FMW22}]
\label{nonconstantsweeps}
If a ROWJFA has sweep complexity $\omega(1)$ then its state diagram  has a closed walk with states $\st{p}$ and $\st{q}$, such that $\st{p}au \Ra^* \st{q} bv \Ra^* \st{p}$ for $a,b\in\Sigma$, $u,v\in \Sigma^*$ and $\st{p}$ has no transition defined for $b$, while $\st{q}$ has no transition for $a$.
\end{lemma}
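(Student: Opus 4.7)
I prove the contrapositive: assume the state diagram of $M$ contains no closed walk of the stated form, and show $sc_M(n) \in \mathcal{O}(1)$. A preliminary remark is that in any accepting computation every non-final sweep must consume at least one input letter, since otherwise it leaves the machine in the same configuration it started from, producing a non-terminating loop. Hence for an accepting computation with $k$ sweeps the sweep-boundary sequence $\p^{(0)}, \ldots, \p^{(k)}$ (with $\p^{(0)}=\s$) is well-defined, and the read-transitions concatenated across any block of consecutive sweeps form a walk in the state diagram.

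Suppose for contradiction that $sc_M \notin \mathcal{O}(1)$, and fix an accepting computation with $k > |Q|$ sweeps. Pigeonholing on $\p^{(0)}, \ldots, \p^{(k-1)}$ locates $i < j < k$ with $\p^{(i)}=\p^{(j)}=\p$, so the reads of sweeps $i{+}1,\ldots,j$ trace a closed walk $\gamma$ at $\p$, and every sweep in this block is non-final and therefore jumps at least one letter. A witnessing jump then supplies a state $\p'$ on $\gamma$ and a letter $b$ for which $\p'$ has no transition. Since the computation accepts, this $b$ must eventually be consumed by some state $\q'$ with a transition on $b$.

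The crux is to place $\p'$ and $\q'$ on a common closed walk and produce a second letter $a$ for which $\q'$ is deficient while the walk reads $a$ at $\p'$. This is done by sharpening the pigeonhole: working with a still longer computation (e.g., $k$ exceeding $|Q|^2 \cdot |\Sigma|$) and pigeonholing on triples (boundary state, first-jumped letter in the following sweep, state performing that jump), one arranges that both the jump event $(\p', b)$ and a subsequent consumption of the jumped $b$ by $\q'$ recur inside a single $\p$-to-$\p$ block of sweeps, forcing $\q'$ onto the same closed walk as $\p'$. A monochromaticity step then supplies the letter $a$: if every jump along the walk were on one single letter $b$, the remaining input would after finitely many sweeps consist exclusively of $b$'s, whereupon revisiting $\p'$ forces a non-terminating loop jumping those $b$'s, contradicting both acceptance and unbounded sweep count. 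Combining these ingredients yields $\p' a u \Ra^* \q' b v \Ra^* \p'$, the structure forbidden by hypothesis — contradiction.

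The principal obstacle is this bookkeeping: guaranteeing that the jumping state $\p'$ and the consuming state $\q'$ sit on one and the same closed walk, and aligning the read events so that the roles of $a$ and $b$ are genuinely complementary. The argument exploits crucially that one-way jumping preserves the relative cyclic order of letters, so a letter jumped early in a sweep reappears at a predictable relative position in subsequent sweeps, making the iterated pigeonhole on jump/read event types close.
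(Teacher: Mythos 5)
This lemma is quoted in the paper from \cite{FMW22} without proof, so there is no in-paper argument to compare against; judged on its own terms, your proposal has a genuine gap exactly at the step you yourself label ``the crux.'' The preliminary material is sound: in an accepting computation every non-final sweep reads at least one letter and jumps at least one letter, pigeonholing on sweep-boundary states yields a block of sweeps whose reads trace a closed walk $\gamma$, and any jump inside that block exhibits a state $\mathbf{p}'$ on $\gamma$ with no transition on some letter $b$. But the two claims that carry all the content are asserted, not proved. First, the triple pigeonhole (boundary state, first-jumped letter, jumping state) does not ``force $\mathbf{q}'$ onto the same closed walk'': the $b$ jumped at $\mathbf{p}'$ in sweep $i{+}1$ may be jumped again in every sweep of the block and be read only long after sweep $j$, so the consuming state $\mathbf{q}'$ need not lie on $\gamma$ at all --- the recurrence of an identical jump-event type in a later sweep gives no information about where the \emph{earlier} jumped occurrence is consumed. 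Second, the monochromaticity step yields at most that jumps in the block occur on two distinct letters, performed by \emph{some} states. The lemma needs the much stronger interlocking condition: the specific state $\mathbf{q}$ that reads $b$ on the walk must itself lack a transition on a letter $a$, and that same $a$ must be the letter read by $\mathbf{p}$ on the walk, in the cyclic order $\mathbf{p}au \rightarrow^* \mathbf{q}bv \rightarrow^* \mathbf{p}$. Nothing in your ingredients ties the second jumped letter to the deficiencies of $\mathbf{q}'$, nor to the letters read by $\mathbf{p}'$, nor fixes the required order; the closing sentence ``combining these ingredients yields $\mathbf{p}'au \rightarrow^* \mathbf{q}'bv \rightarrow^* \mathbf{p}'$'' is a non sequitur.

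A secondary flaw: the monochromaticity argument is broken as stated. If all jumps were on $b$, the remaining input indeed becomes $b^m$, but the machine need not ``revisit $\mathbf{p}'$''; it may thereafter sit in $b$-reading states and simply finish. The correct dichotomy is that the machine either reads all of $b^m$ (so the block contributes only $\mathcal{O}(1)$ further sweeps) or some sweep ends by jumping a tail of $b$'s from a $b$-deficient state, in which case the next sweep starts in that very state, reads nothing, and the computation diverges, contradicting acceptance. That part is repairable; the missing interlocking argument is not cosmetic --- it is the entire difficulty of the lemma, and your write-up acknowledges the bookkeeping obstacle without actually overcoming it.
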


\section{Regular languages with non-constant sweep complexity}\label{sec:logregular}

In this section we show that there is no sweep complexity separation between regular and non-regular languages by exhibiting automata which accept regular languages while requiring superconstant number of sweeps.

Consider first the automaton $\mathcal{B}$ with states $\{\st{1}, \st{2}, \st{3}\}$ where $\st{1}$ is initial and final, and transitions are $\{\st{1}a\rightarrow \st{2}, \st{2}a\rightarrow \st{1}, \st{1}b\rightarrow \st{3}, \st{3}b\rightarrow \st{1}\}$, described in Fig.~\ref{fig:B}.

\begin{figure}[t]
\begin{minipage}{.48\textwidth}
    \centering
    \begin{center}
\begin{tikzpicture}[scale=0.13]
\tikzstyle{every node}+=[inner sep=0pt]
\draw [black] (37.7,-24.8) circle (3);
\draw (37.7,-24.8) node {$\mathbf{1}$};
\draw [dashed,black] (22.9,-24.8) circle (3);
\draw [dashed,black] (22.9,-24.8) circle (2.4);
\draw (22.9,-24.8) node {$\mathbf{A}$};
\draw [dashed,black] (52.5,-24.8) circle (3);
\draw (52.5,-24.8) node {$\mathbf{B}$};
\draw [black] (37.7,-19) -- (37.7,-21.8);
\fill [black] (37.7,-21.8) -- (38.2,-21) -- (37.2,-21);
\draw [black] (34.7,-24.8) -- (25.9,-24.8);
\fill [black] (25.9,-24.8) -- (26.7,-25.3) -- (26.7,-24.3);
\draw (30.3,-25.3) node [below] {$a$};
\draw [black] (40.7,-24.8) -- (49.5,-24.8);
\fill [black] (49.5,-24.8) -- (48.7,-24.3) -- (48.7,-25.3);
\draw (45.1,-25.3) node [below] {$b$};
\end{tikzpicture}
\end{center}
    \caption{Artificially increasing the automaton's complexity by adding non-functional states (all final states in $\st{A}$).}
    \label{fig:accept_rubbish}
\end{minipage}
    \hfill
\begin{minipage}{.48\textwidth}
    \centering
    \begin{center}
\begin{tikzpicture}[scale=0.13]
\tikzstyle{every node}+=[inner sep=0pt]
\draw [black] (37.7,-24.8) circle (3);
\draw (37.7,-24.8) node {$\mathbf{1}$};
\draw [black] (37.7,-24.8) circle (2.4);
\draw [black] (22.9,-24.8) circle (3);
\draw (22.9,-24.8) node {$\mathbf{2}$};
\draw [black] (52.5,-24.8) circle (3);
\draw (52.5,-24.8) node {$\mathbf{3}$};
\draw [black] (37.7,-19) -- (37.7,-21.8);
\fill [black] (37.7,-21.8) -- (38.2,-21) -- (37.2,-21);
\draw [black] (34.7,-24.8) -- (25.9,-24.8);
\fill [black] (25.9,-24.8) -- (26.7,-25.3) -- (26.7,-24.3);
\draw (30.3,-25.3) node [below] {$a$};
\draw [black] (25.205,-22.898) arc (120.8641:59.1359:9.931);
\fill [black] (35.39,-22.9) -- (34.96,-22.06) -- (34.45,-22.92);
\draw (30.3,-20.99) node [above] {$a$};
\draw [black] (40.7,-24.8) -- (49.5,-24.8);
\fill [black] (49.5,-24.8) -- (48.7,-24.3) -- (48.7,-25.3);
\draw (45.1,-25.3) node [below] {$b$};
\draw [black] (40.049,-22.952) arc (119.74539:60.25461:10.18);
\fill [black] (40.05,-22.95) -- (40.99,-22.99) -- (40.5,-22.12);
\draw (45.1,-21.11) node [above] {$b$};
\end{tikzpicture}
\end{center}
    \caption{ROWJFA $\mathcal{B}$ accepts $\{w\in\{a,b\}^* \mid |w|_a \mbox{ and } |w|_b \mbox{ are even}\}$ with sweep complexity $\Theta(\log n)$.}
    \label{fig:B}
\end{minipage}
\end{figure}

\begin{proposition}\label{prop:B_reg}
    $L(\mathcal{B})$ is regular.
\end{proposition}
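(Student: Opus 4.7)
The plan is to prove that the language of $\mathcal{B}$ is exactly
\[
L_{ee} := \{w \in \{a,b\}^* \mid |w|_a \equiv |w|_b \equiv 0 \pmod 2\},
\]
which is regular (accepted, even as an ordinary DFA, by the standard four-state product automaton tracking parities of $|w|_a$ and $|w|_b$). The key tool is a simple state invariant. The only read transitions of $\mathcal{B}$ are the toggles $\st{1} \leftrightarrow_a \st{2}$ and $\st{1} \leftrightarrow_b \st{3}$, and jumps leave the state unchanged. Hence, throughout the entire computation, the current state encodes the parities of the $a$'s and $b$'s that have been read so far: $\st{1}$ for (even, even), $\st{2}$ for (odd, even), $\st{3}$ for (even, odd); the combination (odd, odd) is unreachable.

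For the easy inclusion $L(\mathcal{B}) \subseteq L_{ee}$: if $w$ is accepted, the machine halts in $\st{1}$ with every letter read, so by the invariant $|w|_a$ and $|w|_b$ are both even.

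For the reverse inclusion $L_{ee} \subseteq L(\mathcal{B})$ I show the machine cannot get stuck on any $w \in L_{ee}$. By determinism, an infinite computation must contain a sweep in which no letter is read, after which the configuration never changes. Let $\st{q}$ and $u$ be the state and remaining input at the start of such a stalling sweep; then $\st{q}$ has no transition on any letter of $u$. Since $\st{1}$ reads both $a$ and $b$, we have $\st{q} \neq \st{1}$. If $\st{q} = \st{2}$, then $u \in b^*$, so every $a$ in $w$ has already been read; but the invariant forces the $a$-count at $\st{2}$ to be odd, contradicting $|w|_a$ being even. The case $\st{q} = \st{3}$ is symmetric. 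Consequently the machine cannot stall; every letter is eventually read, and by the invariant the halting state is $\st{1}$, so $w$ is accepted.

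The main obstacle, and the reason a naive inductive ``one-sweep-at-a-time'' argument does not work, is that sweeps of $\mathcal{B}$ do not always end in $\st{1}$ (they can end in $\st{2}$ or $\st{3}$, with a pending unpaired letter). The clean way around this is precisely the global state invariant above, which holds between every two elementary steps irrespective of sweep boundaries and reduces the entire regularity claim to a parity bookkeeping argument.
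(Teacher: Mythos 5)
Your proof is correct and takes essentially the same approach as the paper's: a parity invariant on the states ($\st{1}$, $\st{2}$, $\st{3}$ recording the parities of the $a$'s and $b$'s read so far), with acceptance determined by whether the computation ends in $\st{1}$ having read everything. If anything, your version is more careful than the paper's terse argument, since you explicitly rule out stalling computations (a sweep in $\st{2}$ or $\st{3}$ that reads nothing) on words with both letter counts even, a point the paper glosses over.
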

\begin{proof}
    We claim that $L(\mathcal{B})=\{w\in\{a,b\}^* \mid |w|_a \mbox{ and } |w|_b \mbox{ are even}\}$. This is obviously a regular language (i.e., Fig.~\ref{fig:DFAodd} where $\st{00}$ is the final state).

    The computation for a word $w$ is rejecting if it finishes in either $\st{2}$ or $\st{3}$. However, the only time that the machine ends up in state $\st{2}$ is when it reads an odd number of $a$'s, and, similarly, it ends in $\st{3}$ when it reads an odd number of $b$'s. Since both of these types of words are rejected, we conclude.
    \qed
\end{proof}

\begin{theorem}\label{prop:B_linear}
    The sweep complexity of $\mathcal{B}$ is $\Theta(\log n)$.
\end{theorem}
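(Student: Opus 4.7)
The plan is two-sided: a lower bound $\Omega(\log n)$ by explicit construction, and a matching upper bound $O(\log n)$ by analyzing the structure of a single sweep.

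For the lower bound, I would produce a family of accepted inputs whose length triples each time the sweep count goes up by one. Given $u=u_1\cdots u_m\in\{a,b\}^*$, define the expansion $E(u)=\bar u_1 u_1 \bar u_1\,\bar u_2 u_2 \bar u_2\cdots \bar u_m u_m \bar u_m$, where $\bar x$ denotes the opposite letter. Set $w_2=abab$ and $w_{k+1}=E(w_k)$, so that $|w_k|=4\cdot 3^{k-2}$. A single sweep of $\mathcal{B}$ from state $\st{1}$ handles each triple $\bar u_i u_i \bar u_i$ as a round trip: a read of $\bar u_i$ into $\st{2}$ or $\st{3}$, a jump over the middle $u_i$, then a read of $\bar u_i$ back to $\st{1}$. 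Hence the residue after one sweep of $w_{k+1}$ is exactly the concatenation of the jumped middles $u_1\cdots u_m=w_k$, and the next sweep starts in $\st{1}$. Induction on $k$ with base case $w_2=abab$ (two sweeps, shown by direct trace) gives $k$ sweeps on $w_k$, hence $\mathrm{sc}_{\mathcal{B}}(|w_k|)\geq k=\Omega(\log |w_k|)$.

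For the upper bound I would first record a structural observation: in any sweep of length $n$ each input position is dequeued exactly once, since a letter jumped at step $t$ is placed at the back of a queue of length $n-r_t$ and so would require another $n-r_t>n-t$ steps to return to the front---more than the $n-t$ remaining steps of the sweep. Consequently a sweep from $\st{1}$ scans the input left to right and partitions the positions into consecutive pairs of the form $c_i\bar c_i^{j_i} c_i$, possibly followed by a single partial final pair. Writing $p$ for the number of complete pairs and $\epsilon\in\{0,1\}$ for the partial tail, the residue has length $n-2p-\epsilon$ and the explicit shape $\bar c_1^{j_1}\bar c_2^{j_2}\cdots$, i.e., at most $p+1$ maximal alternating runs. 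If $p\geq n/4$ then the length already more than halves in this single sweep. If $p<n/4$, then the residue has fewer than $n/4$ runs whose mean length exceeds two; I would trace the next sweep through those runs and show, by a case split on its three possible starting modes, that the length after sweep $i+1$ is at most a fixed fraction $\rho<1$ of the length before sweep $i$. Iterating this two-sweep length reduction yields $\mathrm{sc}_{\mathcal{B}}(n)=O(\log n)$.

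The main obstacle is the slow case $p<n/4$: the residue may force the next sweep to start in $\st{2}$ or $\st{3}$ rather than $\st{1}$, and its runs can have arbitrary parities. The plan is to reduce any sweep started in $\st{2}$ (resp.\ $\st{3}$) to a prefix of leading jumps over a block of $b$'s (resp.\ $a$'s) followed by a state-$\st{1}$ sweep on the remaining suffix, and then to verify that the fully alternating residues---precisely the outputs of $E$ used in the lower-bound construction---are the extremal ones for the per-sweep length ratio. Any other residue reads strictly more letters in the next sweep, so it cannot prolong the computation beyond the logarithmic rate witnessed by the $w_k$ family.
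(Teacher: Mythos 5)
Your lower bound is correct, and it is essentially the paper's own argument in different clothing: your expansion satisfies $E((ab)^m)=(ba)^{3m}$ and $E((ba)^m)=(ab)^{3m}$, so the family $w_k$ is exactly the alternating words $(ab)^{2\cdot 3^{k-2}}$, $(ba)^{2\cdot 3^{k-2}}$, and the paper runs the same computation on $(ab)^{6^k}$, each sweep cutting an alternating word to a third of its length. No issue there.

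The upper bound, however, has a genuine gap: the key lemma you intend to prove --- that the length after sweep $i+1$ is at most a fixed fraction $\rho<1$ of the length before sweep $i$ --- is false. Take $w=bab\,ab^{2m}a\,bab$, which is accepted (both letter counts are even). Sweep 1 from $\st{1}$ consists of the pairs $(b,a,b)$, $(a,b^{2m},a)$, $(b,a,b)$, so $p=3<n/4$, it leaves residue $ab^{2m}a$, and the machine returns to $\st{1}$; that residue has $3$ runs of mean length well above two, exactly your slow case. Sweep 2 is then again slow: it reads only the two $a$'s and leaves $b^{2m}$. So the length drops only from $2m+8$ to $2m$ across two sweeps, and the ratio tends to $1$ as $m$ grows; no constant $\rho<1$ exists. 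Worse, the construction nests: wrapping each maximal block $B$ of a residue as $\bar{c}B\bar{c}$ (with $c$ the letter of $B$) builds, for every fixed window size $c$, accepted inputs whose length shrinks by a factor arbitrarily close to $1$ over $c$ consecutive sweeps, so no length-based potential over a bounded window of sweeps can succeed. The quantity that does decay geometrically --- and is what the paper's proof tracks --- is the number of maximal unary blocks: since $\st{1},\st{2}$ read through any $a$-block and $\st{1},\st{3}$ read through any $b$-block, every block is read or jumped atomically; moreover, between two jumped blocks of opposite letters the machine must fully read at least two blocks (one to return to $\st{1}$, one to reach the other deficient state), so a word with $k$ blocks has at most $\lfloor k/3\rfloor+1$ blocks after one sweep. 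This yields $\bigo(\log n)$ sweeps for accepted words, since an accepted input with one or two blocks left is finished in $\bigo(1)$ further sweeps. Note that in the counterexample above the block count indeed falls $9\to 3\to 1$ even though the length barely moves; replacing your length argument by this block-count argument (your pair decomposition already contains the needed atomicity observation) repairs the proof.
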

\begin{proof}
    Firstly, observe that in any sweep, while in $\st{1}$ or $\st{2}$, the automaton fully reads any block of $a$'s, and, similarly, while in $\st{1}$ or $\st{3}$, the automaton fully reads any block of $b$'s. 
    Thus, the number of sweeps necessary to process a word $w$ consisting of $2n$ unary blocks is never higher than that of processing the word $(ab)^{n}$. 
    Now, for the inputs $(ab)^n$ (and $(ba)^n$), starting with the first $b$ (respectively, $a$) every third symbol is jumped over while the rest is read. This means that from an arbitrary word with $k$ unary blocks, after one sweep at most 
    $\lfloor \frac{k}{3}\rfloor+1$ blocks
    remain. This immediately gives us that the machine makes at most logarithmically many sweeps. As for the other side, consider an input $w=(ab)^{6^k}$. Per the previous argument, after $i\leq $ sweeps the remaining input will be $(ab)^{\frac{6^k}{3^i}}$ or $(ba)^{\frac{6^k}{3^i}}$ depending on the parity of $i$, so the number of sweeps is at least $\log_3\frac{|w|}{2}=k$. Eventually, the input is accepted according to Proposition~\ref{prop:B_reg}, so the sweep complexity of $\mathcal{B}$ is also $\Omega(\log n)$.\qed
\end{proof}

The above results showcase the existence of ROWJFAs that accept regular languages while performing a logarithmic number of sweeps.
Next we construct of a ROWJFA that accepts a regular language while requiring a linear number of sweeps in the worst case. Consider the automaton $\mathcal{C}$ in Fig.~\ref{fig:linreg} defined as 
\[
\mathcal{C}=\{
\{\st{A0},\st{A1},\st{A2},\st{A3},\st{B1},\st{B2},\st{B3}\},
\{a,b\},
R,
\st{A0},
\{\st{B1}\}
\},
\]
where the transitions from $R$ are given by the edges in the figure.

\begin{figure}[t]
    \centering
    \begin{center}
\begin{tikzpicture}[scale=0.12]
\tikzstyle{every node}+=[inner sep=0pt]
\draw [black] (17.5,-26.6) circle (3);
\draw (17.5,-26.6) node {$\st{A0}$};
\draw [black] (27.3,-18.3) circle (3);
\draw (27.3,-18.3) node {$\st{A1}$};
%\draw [black] (27.3,-18.3) circle (2.4);
\draw [black] (38,-26.6) circle (3);
\draw (38,-26.6) node {$\st{A2}$};
\draw [black] (27.3,-34.5) circle (3);
\draw (27.3,-34.5) node {$\st{A3}$};
%\draw [black] (27.3,-34.5) circle (2.4);
\draw [black] (51.6,-26.6) circle (3);
\draw (51.6,-26.6) node {$\st{B1}$};
\draw [black] (51.6,-26.6) circle (2.4);
\draw [black] (64.4,-34.5) circle (3);
\draw (64.4,-34.5) node {$\st{B3}$};
\draw [black] (64.4,-18.3) circle (3);
\draw (64.4,-18.3) node {$\st{B2}$};
\draw [black] (11.8,-26.6) -- (14.5,-26.6);
\fill [black] (14.5,-26.6) -- (13.7,-26.1) -- (13.7,-27.1);
\draw [black] (19.79,-24.66) -- (25.01,-20.24);
\fill [black] (25.01,-20.24) -- (24.08,-20.37) -- (24.72,-21.14);
\draw (23.35,-22.94) node [below] {$a$};
\draw [black] (29.67,-20.14) -- (35.63,-24.76);
\fill [black] (35.63,-24.76) -- (35.3,-23.88) -- (34.69,-24.67);
\draw (31.7,-22.95) node [below] {$a$};
\draw [black] (35.59,-28.38) -- (29.71,-32.72);
\fill [black] (29.71,-32.72) -- (30.65,-32.65) -- (30.06,-31.84);
\draw (31.65,-30.05) node [above] {$b$};
\draw [black] (24.96,-32.62) -- (19.84,-28.48);
\fill [black] (19.84,-28.48) -- (20.14,-29.37) -- (20.77,-28.6);
\draw (23.41,-30.06) node [above] {$b$};
\draw [black] (30.284,-18.004) arc (92.5534:49.72998:27.696);
\fill [black] (49.42,-24.54) -- (49.13,-23.64) -- (48.49,-24.41);
\draw (41.37,-18.93) node [above] {$b$};
\draw [black] (61.437,-34.894) arc (-90.8862:-152.47853:10.168);
\fill [black] (61.44,-34.89) -- (60.64,-34.38) -- (60.63,-35.38);
\draw (55.25,-33.88) node [below] {$b$};
\draw [black] (61.85,-32.92) -- (54.15,-28.18);
\fill [black] (54.15,-28.18) -- (54.57,-29.02) -- (55.1,-28.17);
\draw (59,-30.05) node [above] {$b$};
\draw [black] (52.629,-23.792) arc (151.92357:93.99821:10.814);
\fill [black] (61.42,-18.09) -- (60.58,-17.65) -- (60.65,-18.65);
\draw (55.34,-19.31) node [above] {$a$};
\draw [black] (61.88,-19.93) -- (54.12,-24.97);
\fill [black] (54.12,-24.97) -- (55.06,-24.95) -- (54.52,-24.11);
\draw (58.94,-22.95) node [below] {$a$};
\draw [black] (49.459,-28.699) arc (-48.90124:-95.07977:25.732);
\fill [black] (49.46,-28.7) -- (48.53,-28.85) -- (49.19,-29.6);
\draw (41.33,-34.32) node [below] {$a$};
\end{tikzpicture}
\end{center}
    \caption{ROWJFA $\mathcal{C}$ accepts $\{w\in\{a,b\}^* \mid |w|_a \mbox{ and } |w|_b \mbox{ are odd}\}$ with sweep complexity $\Theta(n)$.}
    \label{fig:linreg}
\end{figure}
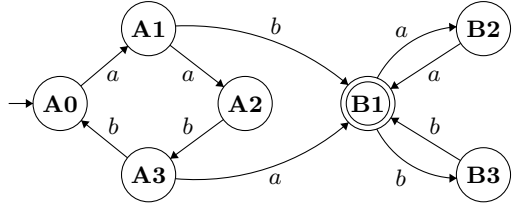

\begin{proposition}
\label{prop:C_linear}
The sweep complexity of $\mathcal{C}$ is $\Theta(n)$.
\end{proposition}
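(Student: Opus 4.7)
The plan is to prove the upper and lower bounds separately. The upper bound $sc_{\mathcal{C}}(n) \in \mathcal{O}(n)$ is immediate: for any $w \in L(\mathcal{C})$, the accepting computation on $w$ must terminate, so every sweep must read at least one remaining input symbol (otherwise nothing changes and no subsequent sweep can make progress either); the total number of sweeps is therefore bounded by $|w|$.

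For $sc_{\mathcal{C}}(n) \in \Omega(n)$ the natural witness family is $w_k = a^{2k+1} b^{2k+1}$, of length $n = 4k+2$. Each $w_k$ belongs to $L(\mathcal{C})$ since $|w_k|_a = |w_k|_b = 2k+1$ is odd. I would establish a one-sweep recurrence by directly tracing the transitions: starting in $\st{A0}$ on input $a^{2j+1} b^{2j+1}$ with $j \geq 1$, the first sweep reads the leading $aa$ via $\st{A0} \to \st{A1} \to \st{A2}$, jumps over the remaining $2j-1$ copies of $a$ (since $\st{A2}$ has no $a$-transition), reads $bb$ via $\st{A2} \to \st{A3} \to \st{A0}$, and jumps over the final $2j-1$ copies of $b$ (since $\st{A0}$ has no $b$-transition). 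Thus the sweep returns to $\st{A0}$ with exactly $a^{2j-1} b^{2j-1}$ remaining, i.e., the same configuration one step smaller.

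Iterating this reduction gives $w_k \to w_{k-1} \to \cdots \to w_0 = ab$ in $k$ sweeps, after which one further sweep does $\st{A0} \to \st{A1} \to \st{B1}$ and halts in the accepting state $\st{B1}$. Hence $E(\mathcal{C}, w_k) = k+1$, so $sc_{\mathcal{C}}(4k+2) \geq k+1$, which gives $sc_{\mathcal{C}}(n) \in \Omega(n)$ and, combined with the trivial upper bound, the claimed $\Theta(n)$. The only step deserving care is the transition-by-transition trace of a single sweep; the rest -- the induction on $j$, the final acceptance step, and the matching upper bound -- is routine.
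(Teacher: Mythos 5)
Your proof is correct and follows essentially the same route as the paper: the same witness family $a^{2k+1}b^{2k+1}$, the same sweep-by-sweep trace showing one sweep reduces $a^{2j+1}b^{2j+1}$ to $a^{2j-1}b^{2j-1}$ while returning to $\st{A0}$, and a final accepting sweep into $\st{B1}$. The only cosmetic difference is the upper bound, where the paper cites the known fact that every ROWJFA processes its input in $\mathcal{O}(n)$ sweeps, while you prove that fact inline (every sweep of an accepting computation must read at least one symbol).
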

\begin{proof}
To see that the complexity is $\Omega(n)$, consider the word $a^{2n+1}b^{2n+1}$, for $n>1$. In this case, from $\st{A0}$ we go first to $\st{A2}$ where we jump over all the remaining $a$'s, then we move back to $\st{A0}$ where we jump over all the remaining $b$'s, and we are left with $a^{2n-1}b^{2n-1}$ to process. After the $n$th sweep, we are only left with $ab$ to process, which takes us from $\st{A0}$ to $\st{B1}$, and we accept.

For the $\mathcal{O}(n)$ complexity, observe that the above computation is indeed the longest possible. Once we reach $\st{B1}$ we either accept or reject a word in at most $\bigo(\log n)$ sweeps, same as in Theorem~\ref{prop:B_linear}. Of course, this part also directly follows from the fact that all ROWJFA process their inputs in $\mathcal{O}(n)$ sweeps.\qed
\end{proof}

\begin{proposition}\label{thm:C_reg}
    $L(\mathcal{C})$ is regular.
\end{proposition}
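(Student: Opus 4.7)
The plan is to identify $L(\mathcal{C})$ explicitly as $\{w\in\{a,b\}^* \mid |w|_a \mbox{ and } |w|_b \mbox{ are odd}\}$, which is patently regular (the product of the two parity automata). To prove the equality, I will attach to each state a pair of parities recording $(|\text{read $a$'s}|,|\text{read $b$'s}|)\pmod 2$, and show by inspection of the seven transitions that these parities are state-invariants of the whole computation. Since jumping simply moves a letter to the tail without reading it, jumps preserve both counters, so only the reading transitions need to be checked, yielding the labels $\st{A0},\st{A2}\mapsto(0,0)$, $\st{A1}\mapsto(1,0)$, $\st{A3}\mapsto(0,1)$, $\st{B1}\mapsto(1,1)$, $\st{B2}\mapsto(0,1)$, $\st{B3}\mapsto(1,0)$.

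With the invariant in hand, the easy direction is immediate: acceptance means terminating in $\st{B1}$ with the input exhausted, at which point the total numbers of read letters equal $|w|_a$ and $|w|_b$; the $\st{B1}$-label forces both to be odd. For the converse, I assume $|w|_a$ and $|w|_b$ are both odd and argue that the computation can neither halt rejectingly nor loop forever. Each of the six non-$\st{B1}$ terminal possibilities has a parity label that contradicts $(1,1)$, so if the computation halts at all, it halts in $\st{B1}$.

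The part requiring the most care is ruling out nontermination. The only looping configurations are those where the remaining input is a unary block over a letter for which the current state has no transition, namely $\st{A0}b^k$, $\st{A2}a^k$, $\st{B2}b^k$, $\st{B3}a^k$ with $k\geq 1$ (in states $\st{A1}$, $\st{A3}$ and $\st{B1}$ both transitions are defined, so no jump can occur). In each such configuration one of the two total counts $|w|_a$ or $|w|_b$ is forced to have the wrong parity by the label (for example, $\st{A0}b^k$ forces $|w|_a$ even, $\st{B2}b^k$ forces $|w|_a$ even, and symmetrically for the other two), contradicting the odd-odd assumption.

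The main obstacle I anticipate is the bookkeeping rather than any deep argument: one has to verify that the invariant is genuinely global (stable across sweep boundaries, not just within a single sweep) and enumerate the looping configurations without missing one. Because jumps do not alter the residual multiset of letters and every reading step flips exactly the parity matching its label, both tasks reduce to a finite inspection of $\mathcal{C}$, after which the matching direction (if either parity is even, the computation either terminates in a non-final state consistent with the invariant or loops in one of the four configurations above) follows from the same case analysis, completing the proof.
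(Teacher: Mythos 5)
Your proof is correct, and it is in fact more complete than the paper's own argument. Both rest on the same parity bookkeeping: the paper also identifies $L(\mathcal{C})$ as $\{w \mid |w|_a,|w|_b \text{ odd}\}$, but it organizes the argument as a decomposition rather than a global invariant. Concretely, the paper observes that the $\st{B}$-cycle is a copy of the automaton $\mathcal{B}$ of Proposition~\ref{prop:B_reg}, so a computation that starts and ends at $\st{B1}$ consumes evenly many $a$'s and $b$'s; that entering $\st{B1}$ from the left part reads exactly one $a$ and one $b$ (via $\st{A0}\to\st{A1}\to\st{B1}$ or $\st{A2}\to\st{A3}\to\st{B1}$); and that from $\st{A0}$ the machine reaches $\st{A0}$ or $\st{A2}$ having read evenly many of each letter. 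Your parity labels on the states encode exactly this information, just verified uniformly rather than piecewise. Where your proof genuinely adds value is the converse inclusion: the paper's ``our conclusion follows'' silently skips the argument that every odd/odd word is actually \emph{accepted}, which requires ruling out non-terminating computations, i.e., the machine jumping forever in the stuck configurations $\st{A0}b^k$, $\st{A2}a^k$, $\st{B2}b^k$, $\st{B3}a^k$. Your enumeration of these configurations, together with the observation that each forces one total letter count to be even (contradicting the odd/odd hypothesis), is precisely the missing piece, and it is the only place where one really needs that jumps preserve the residual multiset of letters. One trivial slip: $\mathcal{C}$ has ten transitions, not seven (you likely counted its seven states); your labels are nonetheless consistent with all ten edges, so nothing in the argument is affected.
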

\begin{proof}
    We show that $L(\mathcal{C})=\{w\in\{a,b\}^* \mid |w|_a \mbox{ and } |w|_b \mbox{ are odd}\}$. This is obviously a regular language (i.e., Fig.~\ref{fig:DFAodd} where $\st{11}$ is the final state).

    To show that indeed $L(\mathcal{C})$ is the language containing every binary word that has odd number of $a$'s and $b$'s, first note that the right hand side automaton consisting only of the $\st{B}$-labelled states, accepts every language that has an even number of $a$'s and $b$'s, as shown by Proposition~\ref{prop:B_reg}.

    To reach $\st{B1}$ we have to read exactly one $a$ and one $b$ starting from either $\st{A0}$ or $\st{A2}$.
    Since from the start state $\st{A0}$ we can reach $\st{A0}$ or $\st{A2}$ by processing an even number of $a$'s and $b$'s, possibly with jumps, our conclusion follows.
    \qed
\end{proof}

As a consequence of Propositions~\ref{prop:C_linear} and~\ref{thm:C_reg}, we know that the class of regular languages has no upper bound in terms of sweep complexity, since the sweep complexity of any is in $\bigo(n)$. The left hand cycle in the automata $\mathcal{C}$ described in Fig.~\ref{fig:linreg} also showcases that while the conditions from Lemma~\ref{nonconstantsweeps} are necessary for non-regularity (as it requires superconstant complexity), they are not sufficient.

\section{Separation results for the language classes $\swe(\log n)$ and $\swe(n)$}\label{sec:lognonregular}

Consider the prolongable morphism $\varphi(a)=abab$, $\varphi(b)=b$ starting from the word $ab$. We get $\varphi(ab)=ababb$, $\varphi^2(ab)=\varphi(ababb)=ababbababbb$, etc. The infinite word $\phi = \lim_{n\rightarrow \infty} \varphi^n(ab)=ababbababbb\dots$ is a fixed point of $\phi$. It is easy to see that in $\phi$ all $a$'s stand alone, that is, we never have blocks of $a$'s longer than $1$, and the lengths of the blocks of $b$'s are $1,2,1,3$, and so on\footnote{The sequence $\{c(n)\}_{n=1}^{\infty}$ given by the lengths of $b$ blocks is A001511 in OEIS; its most relevant characterization for us is that $c(n) - 1$ is the number of trailing zeros in the binary expansion of $n$, since this means that $c(n)-1$ is $\log n$ for powers of $2$}. 
When applying $\varphi$, each $a$ introduces a new block of $b$'s of length $1$ and extends a block of $b$'s by one, while the number of $a$'s doubles. Thus every other block of $b$'s gets longer by one on each application of $\varphi$, because of the $a$ preceding it. A simple inductive argument shows that the last block of $b$'s in $\varphi^n(ab)$ has length $n+1$, and is preceded by $2^{n}$ occurrences of $a$'s, separated by blocks of $b$'s.

\begin{lemma}\label{lem:phi}
    Consider the morphism $\varphi:\{a,b\}^*\Ra \{a,b\}^*$ given by $\varphi(a)=abab$, $\varphi(b)=b$. The following statements hold for any $n\geq 1$:
    \begin{enumerate}[label={(\roman*)}]
        \item $\varphi^n(ab)\in (ababb^+)^+$;
        \item if $\varphi^n(ab)=ab^{k_1}\cdots ab^{k_m}$, then $\varphi^{n+1}(ab)=abab^{k_1+1}abab^{k_2+1}\cdots abab^{k_m+1}$;
        \item $\varphi^n(ab)=ab^{k_1}\cdots ab^{k_m}$, where $m=2^n$, $k_m=n+1$ and $k_{2i-1}=1$ for all $i\in\{1,\dots, 2^{n-1}\}$. 
    \end{enumerate}
\end{lemma}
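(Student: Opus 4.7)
The plan is to prove parts (ii), (iii), (i) in that order. Part (ii) is a direct unfolding of $\varphi$ and yields the recurrence on which (iii) rests, while (i) will drop out as an immediate corollary of the two.

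First I would prove (ii) by applying $\varphi$ letter by letter to $\varphi^n(ab)=ab^{k_1}\cdots ab^{k_m}$. Since $\varphi(a)=abab$ and $\varphi(b)=b$,
\[
\varphi^{n+1}(ab)=(abab)b^{k_1}(abab)b^{k_2}\cdots(abab)b^{k_m}=abab^{k_1+1}abab^{k_2+1}\cdots abab^{k_m+1},
\]
the trailing $b$ of each $abab$ factor merging with the following $b^{k_i}$ block. Re-reading this as a sequence of $ab^\cdot$ chunks gives a word of length $2m$ with exponent sequence $(1,k_1+1,1,k_2+1,\ldots,1,k_m+1)$, a fact I will reuse in the next two parts.

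I would then prove (iii) by induction on $n$. The base case $n=1$ is direct from $\varphi(ab)=ababb=(ab^1)(ab^2)$, giving $m=2=2^1$, $k_2=2=n+1$ and $k_1=1$. For the inductive step I apply the re-segmentation from (ii) to the induction hypothesis: the new sequence has length $2\cdot 2^n=2^{n+1}$, its last entry is $k_m+1=(n+1)+1=n+2$, and every odd-indexed entry is $1$ by construction.

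Part (i) then follows. By (iii) the exponent sequence of $\varphi^n(ab)$ has even length $2^n$ with $k_{2i-1}=1$, and applying (ii) to $\varphi^{n-1}(ab)$ (with the convention $\varphi^0(ab)=ab$) shows that each even-indexed exponent has the form $k'_j+1\geq 2$, since every $k'_j\geq 1$. Hence consecutive chunks pair as $(ab^1)(ab^{k_{2i}})=abab^{k_{2i}}\in ababb^+$, yielding $\varphi^n(ab)\in(ababb^+)^+$. I do not anticipate a significant obstacle; the single point to be careful about is the re-segmentation in (ii), which is legitimate because $\varphi(a)=abab$ ends in $b$ and $\varphi(b)=b$ contains no $a$, so no spurious block boundaries are introduced.
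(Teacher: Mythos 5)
Your proof is correct, but it takes a genuinely different route from the paper's. The paper proves all three claims by a single simultaneous induction on $n$: from (ii) and (iii) at level $n$ it derives the shape $abab^{k_1+1}\cdots abab^{k_m+1}$ of $\varphi^{n+1}(ab)$ (which gives (i) at level $n+1$), then applies $\varphi$ once more --- i.e., computes $\varphi^{n+2}(ab)$ --- to establish (ii) at level $n+1$, and finally reads off (iii). You instead observe that (ii) needs no induction at all: it is a direct consequence of the morphism property, $\varphi(ab^{k_1}\cdots ab^{k_m})=(abab)b^{k_1}\cdots (abab)b^{k_m}$, followed by re-segmentation; then (iii) follows by a clean induction that uses (ii) as a black box, and (i) drops out as a corollary. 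Your ordering buys a tidier logical structure and avoids the paper's slightly awkward step of reasoning about $\varphi^{n+2}$ inside the induction. What the paper's simultaneous induction buys, on the other hand, is one fact you assert without proof: the positivity claim ``every $k'_j\geq 1$'' used in your derivation of (i). In the paper this is supplied automatically by (i) at level $n$ (membership in $(ababb^+)^+$ forces every $b$-exponent to be at least $1$); in your setup it needs its own one-line argument --- either a trivial induction (the exponent of $\varphi^0(ab)=ab$ is $1$, and by (ii) the new exponents are $1$'s and old exponents plus $1$), or the observation that $\varphi(a)=abab$ ends in $b$ and $\varphi(b)=b$, so no image under $\varphi$ can contain the factor $aa$. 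This is a minor omission, easily patched, and does not affect the correctness of your argument.
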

\begin{proof}
    When $n=1$, then $\varphi(ab)=ababb$, so for $n=1$ all three claims hold. Suppose they hold for $n$. By $(ii)$ and $(iii)$ we have that $\varphi^{n+1}(ab)$ has the form $abab^{k_1+1}abab^{k_2+1}\cdots abab^{k_m+1}$, satisfying $(i)$ for $n+1$. Then,  
    \begin{equation*}\label{eq1}
        \begin{split}
            \varphi^{n+2}(ab) & =\varphi(abab^{k_1+1}\cdots abab^{k_m+1})=\varphi(ab)\varphi(ab^{k_1+1})\cdots \varphi(ab)\varphi(ab^{k_m+1}) \\
            & = (abab^{1+1})(abab^{k_1+2})\cdots (abab^{1+1})(abab^{k_m+2})
        \end{split}
    \end{equation*}
    
    From this we can conclude that $(ii)$ also holds for $n+1\geq 1$. Further, by the equation above we have $\varphi^{n+1}(ab)=ab^{\ell_1}\cdots ab^{\ell_{m'}}$ with $m'=2m=2\cdot 2^n=2^{n+1}$. Finally, because of $(ii)$ we also get that $\ell_{m'}=k_m+1=n+2$ and $\ell_{2i-1}=1$ for all $i\in \{1,\dots, 2^n\}$.
\qed\end{proof}

\begin{figure}[t]
\begin{minipage}{.48\textwidth}
    \centering
\begin{tikzpicture}[scale=0.13]
\tikzstyle{every node}+=[inner sep=0pt]
\draw [black] (13,-28.7) circle (3);
\draw (13,-28.7) node {$\st{1}$};
\draw [black] (30.6,-28.7) circle (3);
\draw (30.6,-28.7) node {$\st{2}$};
\draw [black] (21.8,-41.6) circle (3);
\draw [black] (21.8,-41.6) circle (2.4);
\draw (21.8,-41.6) node {$\st{3}$};
\draw [black] (16,-28.7) -- (27.6,-28.7);
\fill [black] (27.6,-28.7) -- (26.8,-28.2) -- (26.8,-29.2);
\draw (21.8,-28.2) node [above] {$a$};
\draw [black] (7.3,-28.7) -- (10,-28.7);
\fill [black] (10,-28.7) -- (9.2,-28.2) -- (9.2,-29.2);
\draw [black] (28.91,-31.18) -- (23.49,-39.12);
\fill [black] (23.49,-39.12) -- (24.35,-38.74) -- (23.53,-38.18);
\draw (26.8,-36.5) node [right] {$b$};
\draw [black] (20.11,-39.12) -- (14.69,-31.18);
\fill [black] (14.69,-31.18) -- (14.73,-32.12) -- (15.55,-31.56);
\draw (16.8,-36.5) node [left] {$b$};
\draw [black] (29.277,-26.02) arc (234:-54:2.25);
\draw (30.6,-21.45) node [above] {$a$};
\fill [black] (31.92,-26.02) -- (32.8,-25.67) -- (31.99,-25.08);
\end{tikzpicture}
    \caption{ROWJFA $\mathcal{D}$ accepts a non-regular language with $\Theta(\log n)$ sweeps.}
    \label{fig:B2}
\end{minipage}
\hfill
\begin{minipage}{.48\textwidth}
    \centering
        \begin{center}
\begin{tikzpicture}[scale=0.13]
\tikzstyle{every node}+=[inner sep=0pt]
\draw [black] (17.5,-26.6) circle (3);
\draw (17.5,-26.6) node {$\st{00}$};
\draw [black] (29,-17.3) circle (3);
\draw (29,-17.3) node {$\st{10}$};
\draw [black] (41,-26.6) circle (3);
\draw (41,-26.6) node {$\st{11}$};
%\draw [black] (41,-26.6) circle (2.4);
\draw [black] (29,-35.7) circle (3);
\draw (29,-35.7) node {$\st{01}$};
\draw [black] (11.8,-26.6) -- (14.5,-26.6);
\fill [black] (14.5,-26.6) -- (13.7,-26.1) -- (13.7,-27.1);
\draw [black] (19.83,-24.71) -- (26.67,-19.19);
\fill [black] (26.67,-19.19) -- (25.73,-19.3) -- (26.36,-20.08);
\draw (24.2,-22.44) node [below] {$a$};
\draw [black] (19.85,-28.46) -- (26.65,-33.84);
\fill [black] (26.65,-33.84) -- (26.33,-32.95) -- (25.71,-33.73);
\draw (24.26,-30.65) node [above] {$b$};
\draw [black] (26.013,-35.58) arc (-100.4726:-156.23684:10.511);
\fill [black] (18.3,-29.48) -- (18.17,-30.41) -- (19.08,-30.01);
\draw (20.39,-33.98) node [below] {$b$};
\draw [black] (18.403,-23.749) arc (154.78156:103.14307:11.249);
\fill [black] (18.4,-23.75) -- (19.2,-23.24) -- (18.29,-22.81);
\draw (20.56,-19.3) node [above] {$a$};
\draw [black] (31.39,-33.89) -- (38.61,-28.41);
\fill [black] (38.61,-28.41) -- (37.67,-28.5) -- (38.27,-29.29);
\draw (34.05,-30.65) node [above] {$a$};
\draw [black] (39.878,-29.374) arc (-29.01233:-76.63916:12.284);
\fill [black] (31.97,-35.37) -- (32.87,-35.67) -- (32.64,-34.7);
\draw (37.51,-33.7) node [below] {$a$};
\draw [black] (31.37,-19.14) -- (38.63,-24.76);
\fill [black] (38.63,-24.76) -- (38.3,-23.88) -- (37.69,-24.67);
\draw (33.99,-22.45) node [below] {$b$};
\draw [black] (31.971,-17.66) arc (76.14526:28.30337:12.385);
\fill [black] (31.97,-17.66) -- (32.63,-18.34) -- (32.87,-17.37);
\draw (37.6,-19.4) node [above] {$b$};
\end{tikzpicture}
\end{center}
        \caption{DFA accepting words with even (for $\st{00}$ final state) or odd (for $\st{11}$ final state) number of $a$'s and $b$'s.}
        \label{fig:DFAodd}
\end{minipage}
\end{figure}

In what follows we analyze the language accepted by the automaton $\mathcal{D}=\left(\{\st{1},\st{2},\st{3}\}, \{a,b\}, \{\st{1}a\rightarrow \st{2}, \st{2}a\rightarrow \st{2}, \st{2}b\rightarrow \st{3}, \st{3}b\rightarrow \st{1}\}, \st{1}, \{\st{3}\}\right)$, described in Fig.~\ref{fig:B2}.

\begin{lemma}\label{lem:phiaccept}
    For any $n\geq 0$, the ROWJFA $\mathcal{D}$ accepts $\varphi^n(ab)$ in $n+1$ sweeps.
\end{lemma}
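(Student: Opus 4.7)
\textbf{Proof plan for Lemma~\ref{lem:phiaccept}.}

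The plan is to proceed by induction on $n$, with the strengthened invariant that starting in state $\st{1}$ with input $\varphi^n(ab)$, the machine $\mathcal{D}$ ends each sweep in state $\st{1}$ (except the final one, which ends in $\st{3}$) and, after one sweep on $\varphi^{n+1}(ab)$, the remaining input is exactly $\varphi^n(ab)$. The base case $n=0$ is direct: on input $ab$, the machine reads $a$ (to $\st{2}$), then reads $b$ (to $\st{3}$), accepting in a single sweep, which equals $n+1$.

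For the inductive step, I would apply Lemma~\ref{lem:phi}(ii) to rewrite $\varphi^{n+1}(ab)$ as the concatenation $abab^{k_1+1}\,abab^{k_2+1}\cdots abab^{k_m+1}$, where $\varphi^n(ab)=ab^{k_1}\cdots ab^{k_m}$. I would then analyze the first sweep block by block, arguing that if the sweep enters a block $abab^{k_j+1}$ in state $\st{1}$, the sequence of operations is forced: read $a$ ($\st{1}\Ra\st{2}$), read $b$ ($\st{2}\Ra\st{3}$), jump the second $a$ (since $\st{3}$ has no $a$-transition), read $b$ ($\st{3}\Ra\st{1}$), and then jump the remaining $k_j$ letters $b$ (since $\st{1}$ has no $b$-transition). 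This leaves the block fully processed with the machine back in state $\st{1}$, and the jumped letters, in order, form $ab^{k_j}$. Note that $k_j\geq 1$ for every $j$ by Lemma~\ref{lem:phi}(i), so there is always a $b$ available in each block to carry us from $\st{3}$ back to $\st{1}$, which is what keeps the block-by-block analysis consistent.

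Concatenating across all $m$ blocks, the sequence of jumped letters forms $ab^{k_1}ab^{k_2}\cdots ab^{k_m}=\varphi^n(ab)$, and the sweep ends with the machine in state $\st{1}$ on remaining input $\varphi^n(ab)$. Applying the induction hypothesis, $\mathcal{D}$ accepts this residual input in exactly $n+1$ further sweeps, so the total is $n+2$ sweeps on $\varphi^{n+1}(ab)$, as required.

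The main obstacle I anticipate is verifying cleanly that the local "block view" lines up with the global semantics of one sweep: the letters jumped inside one block do not get re-encountered until the next sweep, and they are appended to the global queue in the same order they were jumped. This is exactly the behavior captured by the $\circlearrowright$ relation, but it must be made explicit so that concatenating the local jump-sequences across all $m$ blocks really produces $\varphi^n(ab)$ and not some interleaving. Once this bookkeeping is in place and the observation $k_j\geq 1$ is recorded, the induction goes through.
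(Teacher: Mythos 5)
Your proposal is correct and follows essentially the same route as the paper: the paper's proof also rests on the single-block reduction $\st{1}abab^{k}w \vdash^* \st{1}wab^{k-1}$ (your block analysis with $k_j+1$ in place of $k$), combined with Lemma~\ref{lem:phi}(ii) to conclude that one sweep acts as $\varphi^{-1}$, and then iterates down to $\varphi^0(ab)=ab$, which is accepted in one final sweep. Your explicit induction and the bookkeeping point about jumped letters being appended in order are exactly the content the paper compresses into the displayed relation $\st{1}\varphi^{n+1}(ab)\vdash^* \st{1}\varphi^{n}(ab)$.
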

\begin{proof}
We show that the machine accepts $\varphi^n(ab)$, for any $n\geq 0$. 
From state $\st{1}$ after reading/jumping through a factor of the form $ababb^+$ the automaton gets back to state $\st{1}$. In fact, $\st{1}abab^kw \vdash^* \st{1}wab^{k-1}$, for any $k\geq 1$, so in one sweep the factor $abab^k$ is reduced to $ab^{k-1}$. From Lemma~\ref{lem:phi} we can see that we can write $\varphi^{n+1}(ab)=abab^{k_1+1}abab^{k_2+1}\cdots abab^{k_m+1}$, which means that one sweep of $\mathcal{D}$ acts as the inverse of $\varphi$ on those words when starting from state $\st{1}$, that is, 
\[\st{1}\varphi^{n+1}(ab)=\st{1}abab^{k_1+1}abab^{k_2+1}\cdots abab^{k_m+1} \vdash^* \st{1} ab^{k_1}ab^{k_2}\cdots ab^{k_m}= \st{1} \varphi^{n}(ab).\]
This means that in $n$ sweeps the machine reduces $\varphi^n(ab)$ to $\varphi^0(ab)$.
Finally, for $n=0$, we have $\varphi^0(ab)=ab$, which is accepted by $\mathcal{D}$ in a single sweep.\qed
\end{proof}

\begin{lemma}\label{lem:B2_non_reg}
    The ROWJFA $\mathcal{D}$ accepts a non-regular language.
\end{lemma}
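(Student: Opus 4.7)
The strategy is to exhibit infinitely many Myhill–Nerode-inequivalent prefixes with respect to $L(\mathcal{D})$, and the combinatorial core of the argument is the auxiliary claim
\[(\star)\quad\text{for every }n\geq 0\text{ and every }j\geq 1,\ \varphi^n(ab)\cdot b^j\notin L(\mathcal{D}).\]

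I would establish $(\star)$ by induction on $n$. For the base $n=0$, $\varphi^0(ab)\,b^j = ab^{1+j}$: the machine reads $a$, $b$, $b$, cycling $\st{1}\to\st{2}\to\st{3}\to\st{1}$, after which either the tape is empty in the non-final state $\st{1}$ (when $j=1$), or the remaining $b$'s are jumped forever in $\st{1}$ (when $j\geq 2$). For the inductive step, use Lemma~\ref{lem:phi}(iii) to write $\varphi^n(ab)=ab^{k_1}\cdots ab^{k_m}$, so by Lemma~\ref{lem:phi}(ii)
\[\varphi^{n+1}(ab)\cdot b^j \;=\; (abab^{k_1+1})\cdots(abab^{k_{m-1}+1})\cdot(abab^{k_m+1+j}).\]
The segment-by-segment sweep computation used in the proof of Lemma~\ref{lem:phiaccept} transfers directly: each initial segment $abab^{k_i+1}$ is consumed from $\st{1}$ back to $\st{1}$ while contributing $ab^{k_i}$ to the new tape, and the terminal segment $abab^{k_m+1+j}$ likewise returns to $\st{1}$ while contributing $ab^{k_m+j}$. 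Hence a single sweep of $\mathcal{D}$ on $\varphi^{n+1}(ab)\,b^j$ reduces it to $\varphi^n(ab)\,b^j$ with the machine in state $\st{1}$, so the two words have the same membership in $L(\mathcal{D})$ and the induction closes.

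With $(\star)$ in hand, for each $n\geq 0$ let $u_n$ denote $\varphi^n(ab)$ with its terminal block $b^{n+1}$ deleted (so $u_n$ ends in the letter $a$ and $u_n\,b^{n+1}=\varphi^n(ab)$). I claim the prefixes $\{u_n\}_{n\geq 0}$ are pairwise Myhill–Nerode-inequivalent. Indeed, take any $n<n'$ and the suffix $s=b^{n'+1}$. On one hand, $u_{n'}\,s=\varphi^{n'}(ab)\in L(\mathcal{D})$ by Lemma~\ref{lem:phiaccept}. On the other, $u_n\,s=u_n\,b^{n+1}\,b^{n'-n}=\varphi^n(ab)\,b^{n'-n}$ which by $(\star)$ with $j=n'-n\geq 1$ is \emph{not} in $L(\mathcal{D})$. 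Hence $u_n\not\equiv_{L(\mathcal{D})} u_{n'}$, giving infinitely many Myhill–Nerode classes, so $L(\mathcal{D})$ is not regular.

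The main obstacle is the inductive step of $(\star)$: one must check carefully that appending the extra $b^j$ to the last segment of $\varphi^{n+1}(ab)$ only perturbs the very end of the single-sweep computation, yielding precisely $j$ additional $b$'s at the tail of the new tape while the alternation between $\st{1}$ and $\st{3}$ across the earlier segments is unaffected. Once this structural bookkeeping is in place, the Myhill–Nerode conclusion is almost immediate.
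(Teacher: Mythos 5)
Your proof is correct, but it takes a genuinely different route from the paper's. The paper proves a quantitative structural claim about \emph{all} accepted words: any accepted word of the form $wb^n$ must satisfy $|w|\in\Omega(2^{n/2})$, obtained by showing that a sweep can never jump over two consecutive blocks of $a$'s (so the number of $a$-blocks at most roughly halves per sweep) while each block of $b$'s shrinks by at most $2$ per sweep; non-regularity then follows by a standard pumping argument on the $b$-suffix of a long accepted word. You instead prove exact non-membership for the single family $\varphi^n(ab)\,b^j$, $j\geq 1$, by induction on $n$, reusing the one-sweep reduction $\st{1}abab^kw\vdash^*\st{1}wab^{k-1}$ that underlies Lemma~\ref{lem:phiaccept}; your bookkeeping for the terminal segment $abab^{k_m+1+j}$, which contributes $ab^{k_m+j}$ to the new tape, checks out, and determinism of $\vdash$ then transfers rejection from $\varphi^n(ab)\,b^j$ to $\varphi^{n+1}(ab)\,b^j$. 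You finish with Myhill--Nerode on the prefixes $u_n$, whereas the paper pumps. Your argument is more local and in a sense more elementary: it never analyzes arbitrary accepted words, only the computation on one explicit family, so nothing like Claim~1 of the paper is needed. What the paper's route buys in exchange is precisely that quantitative exponential/logarithmic relation between a suffix $b^n$ and the factor preceding it, which is the phenomenon the authors single out in the concluding remarks as the surprising feature of $\mathcal{D}$, and whose block-halving analysis is explicitly reused in the proof of the $\bigo(\log n)$ upper bound in Lemma~\ref{lem:B2_log}; your proof establishes non-regularity but would not supply those additional ingredients.
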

\begin{proof}
By Lemma~\ref{lem:phiaccept} we know that for any $n$ the machine accepts $\varphi^n(ab)$, which means that for arbitrarily long unary factors consisting of $b$'s, there is some word in $L(\mathcal{D})$ having such a factor as a suffix. Our strategy is to first establish a non-linear relation between the length of those unary factors and the length of the preceding factors in all words accepted by $\mathcal{D}$. Then, by a pumping argument we show that a classical finite automaton cannot verify such a non-linear relation, therefore $L(\mathcal{D})$ cannot be regular.

\noindent{\bf Claim 1.} Words of the form $wb^n$ are only accepted if $|w|\in \Omega(2^{\frac{n}{2}})$.\\
\emph{Proof of Claim 1:}
In any sweep, any block of $a$'s which the automaton starts to read is read and erased completely through a sequence of transitions $\st{1}a^kbu \Ra^* \st{2} bu$. For the automaton to jump over a block of $a$'s, it needs to arrive to its start in state $\st{3}$. Then it jumps over it to the next $b$, after which it starts and reads completely the following block of $a$'s, as described earlier. This means that the machine can never jump over two consecutive blocks of $a$'s. From here we get that if at the beginning of the sweep the number of $a$ blocks was $\ell$, then after the sweep it is at most $\lfloor \frac{\ell}{2}\rfloor+1$.

Furthermore, in each sweep, each block of $b$'s is reduced by at most $2$. This means that the automaton needs at least $\frac{n}{2}$ sweeps to read a block $b^n$, in each of which it reduces the number of $a$ blocks by half (or more). Thus we can conclude that in order to accept a word with a suffix $b^n$, we have to start out with at least $2^{\frac{n}{2}}$ blocks of $a$'s preceding it.\hfill$\nabla$

\noindent{\bf Claim 2.} No finite automaton can accept $L(\mathcal{D})$.\\
\emph{Proof of Claim 2:}
Suppose the opposite, i.e., that there exists some complete DFA $\mathcal{F}$ having $N$ states such that $L(\mathcal{F})=L(\mathcal{D})$. We know that there are words in the language with arbitrarily long suffixes of $b$'s, so there is a $wb^m\in L(\mathcal{F})$ for some word $w$ and exponent $m>N$. By a usual pumping argument, this means that there exists some $\ell$ with $0<\ell<N$ such that $wb^{m+i\cdot \ell}\in L(\mathcal{F})$ for any $i\geq 0$. However, for a large enough $i$ this contradicts Claim 1, as the block of $b$'s can outgrow any upper bound in terms of the length of $|w|$.\hfill$\nabla$

Our result follows as a result of Claims~1 and~2.\qed
\end{proof}

\begin{lemma}
\label{lem:B2_log}
    The sweep complexity of $\mathcal{D}$ is $\Theta(\log n)$.
\end{lemma}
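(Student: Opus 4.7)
The plan is to establish matching upper and lower bounds. For the lower bound $\Omega(\log n)$, I will invoke Lemma~\ref{lem:phiaccept}, which guarantees that $\varphi^n(ab)$ is accepted in $n+1$ sweeps, and compute its length. Since each application of $\varphi$ doubles the number of $a$'s (giving $|\varphi^n(ab)|_a = 2^n$) and $\varphi$ adds three letters for each $a$ while preserving $b$'s, a short induction yields $|\varphi^n(ab)| = 3\cdot 2^n - 1$. Hence for an input of length $N \in \Theta(2^n)$ the sweep count is $n+1 = \Theta(\log N)$, so $sc_{\mathcal{D}}(n) \in \Omega(\log n)$.

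For the upper bound $O(\log n)$, I will bound the sweep count on accepted words only (since $E(\mathcal{D},w)=0$ for rejected words). The key tool is Claim~1 from the proof of Lemma~\ref{lem:B2_non_reg}: in each sweep the number $\ell$ of maximal $a$-blocks satisfies $\ell_{i+1} \leq \lfloor \ell_i/2\rfloor + 1$. Solving this recurrence by setting $b_i = \ell_i - 2$ gives $\ell_i \leq 2 + (\ell_0 - 2)/2^i$, so after at most $\lceil \log_2 n\rceil$ sweeps we have $\ell_i \leq 3$.

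It then remains to show that once $\ell \leq 3$ the machine either accepts or enters a non-terminating configuration within a constant number of further sweeps. I would verify this by case analysis on the block structure and the starting state of the sweep, showing a strict descent $\ell = 3 \to \ell \leq 2 \to \ell \leq 1 \to \ell = 0$ or acceptance, using that two consecutive $a$-blocks can never both be jumped in the same sweep. At $\ell = 0$ the remaining word has the form $b^m$, which is accepted only in trivial short cases (empty word in $\st{3}$, or a single $b$ in $\st{2}$) and otherwise yields an infinite $b$-jumping loop from $\st{1}$, contradicting acceptance. Summing gives $O(\log n) + O(1) = O(\log n)$.

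The main obstacle is the case analysis at small $\ell$, because the Claim~1 bound alone admits $\ell = 2$ as a fixed point ($\lfloor 2/2\rfloor + 1 = 2$) and so does not by itself force termination. A finer argument is needed here, and the subtlety is that one must track the starting state of each sweep and carefully account for whether two surviving jumped $a$-blocks remain distinct in the next word or merge (they merge precisely when every letter strictly between them is consumed in the current sweep). Checking the finitely many configurations with $\ell \leq 3$ against the deterministic block-processing rules shows that the forbidden "two jumped in a row" together with merging always forces a strict decrease, which closes the argument.
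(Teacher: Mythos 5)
Your proposal is correct and follows essentially the same route as the paper: the lower bound is obtained exactly as there, from Lemma~\ref{lem:phiaccept} together with $|\varphi^n(ab)|=3\cdot 2^n-1$, and the upper bound rests on the same halving phenomenon for maximal $a$-blocks established in Claim~1 of Lemma~\ref{lem:B2_non_reg}. Where you differ is in rigor at the endgame: the paper simply asserts that the number of $a$-blocks is ``reduced by at least half'' in each sweep, hence reaches zero after $\mathcal{O}(\log n)$ sweeps, after which one more sweep decides acceptance; you correctly observe that the bound actually supplied by Claim~1, namely $\ell_{i+1}\le\lfloor \ell_i/2\rfloor+1$, has fixed point $2$ and so does not by itself force termination, and you close this gap with a finite case analysis for $\ell\le 3$, using crucially that rejected inputs (including the infinite jumping loops that arise on remaining words $b^m$ or $a^k$) contribute $0$ to sweep complexity. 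One small inaccuracy in that case analysis: the descent is not always strict in a single sweep --- with remaining word $a^kb^m$ and the sweep starting in state $\st{3}$, the machine jumps the entire $a$-block and consumes only $b$'s, so $\ell$ can stall at $1$ for one sweep before the block is read from state $\st{1}$ --- but since this costs only $\mathcal{O}(1)$ extra sweeps on accepted inputs, your conclusion stands.
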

\begin{proof}
    As $|\varphi^n (ab)| = 2^{n+1} + 2^n - 1$,
    by Lemma~\ref{lem:phiaccept} we have that the sweep complexity of $\mathcal{D}$ is $\Omega(\log n)$, so what remains to show is that it is also $\bigo(\log n)$.
    
    We first note that within a sweep all blocks of $a$'s separated by $bb$ are fully processed (including any prefix of $a$'s), while for any symbols $a$ that were jumped over, the entire block that they were part of it was jumped over. 
    Following the argument in the proof of Claim 1 of Lemma~\ref{lem:B2_non_reg}, in each sweep the number of blocks of $a$'s is reduced by at least half, which means that after $\bigo(\log n)$ sweeps there are no more blocks of $a$ on the tape. Then, the machine either accepts in one sweep or it rejects the input. This leads to our conclusion.\qed
\end{proof}

The results of Lemmas~\ref{lem:B2_non_reg} and~\ref{lem:B2_log} mean that we have separation between $\swe(1)$ and $\swe(\log n)$.
\begin{theorem}\label{thm:logseparate}
$\swe(1)\subsetneq \swe(\log n)$
\end{theorem}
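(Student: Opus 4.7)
The plan is to prove the theorem by establishing the two standard components of a strict inclusion: first the containment, and then exhibiting a separating witness.

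For the containment $\swe(1)\subseteq \swe(\log n)$, I would simply observe that any function in $\mathcal{O}(1)$ is also in $\mathcal{O}(\log n)$, so any machine witnessing membership in the smaller class also witnesses membership in the larger one. This requires no work beyond citing the definitions of the complexity classes given earlier in the Preliminaries.

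For strictness, I would take the automaton $\mathcal{D}$ from Figure~\ref{fig:B2} as the separating witness. By Lemma~\ref{lem:B2_log}, the sweep complexity of $\mathcal{D}$ is $\Theta(\log n)$, so in particular $L(\mathcal{D}) \in \swe(\log n)$. It then suffices to argue $L(\mathcal{D}) \notin \swe(1)$. This follows immediately from combining two earlier results: Theorem~\ref{thm:mainReg} says that every language in $\swe(1)$ is regular, while Lemma~\ref{lem:B2_non_reg} says that $L(\mathcal{D})$ is non-regular. Hence $L(\mathcal{D})$ lies in $\swe(\log n) \setminus \swe(1)$, giving strict inclusion.

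Since the difficult technical content (the pumping-style argument showing non-regularity and the block-halving analysis giving the $\mathcal{O}(\log n)$ upper bound) is already packaged inside Lemmas~\ref{lem:B2_non_reg} and~\ref{lem:B2_log}, there is no real obstacle to overcome in the proof of the theorem itself; it is essentially a two-line deduction. The only point worth stating carefully is that $\swe(1)$ is defined via the existence of \emph{some} ROWJFA with $\mathcal{O}(1)$ sweep complexity accepting the language, so to rule out $L(\mathcal{D}) \in \swe(1)$ I must invoke Theorem~\ref{thm:mainReg} (which applies uniformly to all such machines) rather than argue something about $\mathcal{D}$ in particular.
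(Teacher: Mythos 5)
Your proposal is correct and follows essentially the same route as the paper: it cites Lemma~\ref{lem:B2_log} for $L(\mathcal{D})\in \swe(\log n)$, and combines Theorem~\ref{thm:mainReg} with Lemma~\ref{lem:B2_non_reg} to conclude $L(\mathcal{D})\notin \swe(1)$. Your explicit remark about the trivial containment $\swe(1)\subseteq \swe(\log n)$ and about why Theorem~\ref{thm:mainReg} (rather than an argument about $\mathcal{D}$ alone) is needed is a fine clarification, but the substance is identical to the paper's proof.
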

\begin{proof}
    Lemma~\ref{lem:B2_log} says $L(\mathcal{D})\in \swe(\log n)$. By Theorem~\ref{thm:mainReg} we know that $\swe(1)$ is included in the class of regular languages. Finally, by Lemma~\ref{lem:B2_non_reg} we have that $L(\mathcal{D})$ is not regular which means that $L(\mathcal{D})\notin \swe(1)$.\qed
\end{proof}

\begin{lemma}\label{thm:ablin}
Any automaton which accepts $L_{ab}=\{w\in \{a,b\}^* \mid |w|_a=|w|_b \}$ has sweep complexity $\Theta(n)$.
\end{lemma}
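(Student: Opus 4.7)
The upper bound $\mathcal{O}(n)$ holds for every ROWJFA, since any sweep in an accepting computation must read at least one symbol — otherwise the same state and tape would repeat, producing a non-accepting infinite loop. For the lower bound, fix any ROWJFA $M=(Q,\{a,b\},R,\st{s},F)$ with $k=|Q|$ accepting $L_{ab}$, and consider the inputs $w_N=a^N b^N\in L_{ab}$ for large $N$. The plan is to exhibit $\Omega(N)$ sweeps on this family. The crucial invariant is that, since one-way jumping preserves the left-to-right order of unread symbols, the tape at every sweep boundary on $w_N$ keeps the form $a^p b^q$.

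Within a single sweep starting in some state $\st{p}$ on such a tape, the machine first processes the $a$-block and then the $b$-block. By determinism and finiteness, each phase either enters a cycle (reading every remaining occurrence of that letter) or reaches a state with no transition for that letter within $k$ steps (reading at most $k$ symbols and jumping over the rest). I classify each sweep by a pair $(X,Y)\in\{C,D\}^2$ according to these two outcomes. A non-final sweep of type $(C,D)$ is incompatible with acceptance: it leaves a pure-$b$ tape $b^{q'}$ with $q'>0$ in a state with no $b$-transition, so the next sweep just jumps over every $b$ without changing state, looping forever. The symmetric case $(D,C)$ fails likewise, and type $(C,C)$ consumes everything, so it can only be the final sweep. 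Hence every non-final sweep is of type $(D,D)$ and reduces each of $p$ and $q$ by at most $k$.

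It remains to rule out the scenario in which, after a bounded number of $(D,D)$-sweeps, the state trajectory reaches a $(C,C)$ state from which the final sweep consumes everything. Because the behavior of a $(D,D)$-sweep on a tape $a^p b^q$ with $p,q\geq k$ depends only on the current state, the state sequence $\st{s_1}\to\st{s_2}\to\cdots$ is identical on all inputs $a^{N+i}b^{N+j}$ with $N$ sufficiently large and $i,j$ bounded. Suppose this trajectory reached a $(C,C)$ state $\st{s_*}$ in some constant $t\leq k$ sweeps, with cycle lengths $\ell_a,\ell_b$ at the two phases of the final sweep and with $(c_a,c_b)$ the total numbers of $a$'s and $b$'s read during those $t$ sweeps. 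Then accepting $w_N$ depends only on the residues $(p_*\bmod\ell_a,\,q_*\bmod\ell_b)$, and as $N$ varies these residue pairs trace a ``diagonal'' $D=\{((n-c_a)\bmod\ell_a,\,(n-c_b)\bmod\ell_b):n\in\mathbb{N}\}$; since $w_N$ is accepted for every sufficiently large $N$, all of $D$ lies in the accepting set. Setting $d=\gcd(\ell_a,\ell_b)$ and considering the input $a^N b^{N+d}\notin L_{ab}$, the Chinese Remainder Theorem shows that its residue pair again lies on $D$, so the identical sequence of $(D,D)$-sweeps followed by the same $(C,C)$-sweep would accept it — contradicting $L(M)=L_{ab}$. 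Therefore the $(D,D)$-trajectory must remain in $(D,D)$-states while the tape is large, and accepting $w_N$ requires $\Omega(N/k)=\Omega(n)$ sweeps.

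The main obstacle is justifying the uniformity step: that the same sequence of $(D,D)$-sweeps, the same visited states, and the same tallies $(c_a,c_b)$ apply to both $a^Nb^N$ and $a^Nb^{N+d}$. This rests on the observation that whenever both $p,q\geq k$ the number of symbols read and the end state of a single sweep are functions of the current state alone, independent of $p$ and $q$; once this is formalized, the Chinese Remainder collapse yields the contradiction and, together with the trivial upper bound, the claimed $\Theta(n)$ sweep complexity.
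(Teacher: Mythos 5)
Your upper bound is fine, and much of the lower-bound machinery is sound: the invariant that on $a^Nb^N$ every sweep boundary has tape $a^pb^q$; the dichotomy that within a sweep each letter-phase either gets stuck within $k=|Q|$ steps (reading at most $k$ symbols and jumping the rest) or enters a cycle and reads the whole block; and the uniformity observation that a $(D,D)$-sweep's effect (symbols read, end state) depends only on its starting state once $p,q\geq k$. The genuine gap is the sentence ``the symmetric case $(D,C)$ fails likewise.'' It does not, because the two cases are not symmetric. After a non-final $(C,D)$-sweep the machine ends the sweep in the very state that has no $b$-transition (jumps never change the state), facing an all-$b$ tape, so the next sweep is an idle repetition of the same configuration and the input is rejected. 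After a $(D,C)$-sweep, however, the machine ends in the state reached by \emph{reading} the last $b$, facing the all-$a$ tape $a^{p'}$; nothing prevents that state from having $a$-transitions, entering an $a$-cycle, reading everything and accepting. So your conclusion ``every non-final sweep is of type $(D,D)$'' does not follow as written, and with it the per-sweep bound of $2k$ erased symbols --- the heart of the $\Omega(N/k)$ count --- is unsupported.

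The gap is repairable with exactly the machinery you already deploy for the $(C,C)$ endgame, just in one dimension: if the state trajectory hits a $(D,C)$-state at a bounded sweep index, then for every large $N$ the computation reaches the same state $\st{r}$ on tape $a^{N-c}$ for a fixed constant $c$; acceptance of all $a^Nb^N$ forces every residue of $N-c$ modulo the $a$-cycle length at $\st{r}$ to be accepting, and then $a^{N+1}b^N$ --- whose computation is sweep-for-sweep identical up to that point, since the $(D,D)$-sweeps are state-determined and the $b$-phase of the $(D,C)$-sweep sees the same entry state and the same number of $b$'s --- is also accepted, a contradiction. Two smaller imprecisions: whether a sweep whose $a$-phase cycles is $(C,C)$ or $(C,D)$ is \emph{not} a property of the starting state alone (the $b$-phase entry state depends on $p\bmod \ell_a$), and your $\ell_b$ likewise depends on that entry state; both are absorbed by arguing about accept/reject directly as a function of the residue pair, as your CRT step effectively does. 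For comparison, the paper's route is shorter and avoids the case analysis entirely: sublinear complexity forces, on some $a^mb^m$, a single sweep reading more than $2|Q|$ symbols, hence more than $|Q|$ occurrences of one letter, hence a state repetition $\st{p}a^r\Ra^*\st{p}$ inside that sweep; pumping that cycle in place leaves every other sweep untouched and makes the machine accept $a^{m+r}b^m\notin L_{ab}$. Your classification-plus-residues analysis is a legitimate alternative, but each non-$(D,D)$ sweep type must be killed by such a pumping argument, not by appeal to symmetry.
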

\begin{proof}
    We know that every machine has sweep complexity $\mathcal{O}(n)$, so it is enough to show that it is not possible to accept $L_{ab}$ with sublinear sweep complexity. For that we assume that such an automaton, say $\mathcal{F}=(Q,\Sigma,R,s,F)$ exists, and derive a contradiction.

    If $\mathcal{F}$ had linear sweep complexity, then it could have computations on infinitely many inputs in which all sweeps process a constant number of symbols. However, with sublinear complexity we get that for any constant $C$ and for all long enough inputs $w\in L_{ab}$, during the processing of $w$ at least one sweep reads more than $C$ symbols. We also know that $a^nb^n\in L_{ab}$ for any $n\geq 0$. Let $C=2|Q|$ where $|Q|$ is the number of states of $\mathcal{F}$ and consider an input $w=a^mb^m$ with $m$ large enough that the machine reads more than $C$ symbols in some sweep while processing $w$. The remaining input at the beginning of that sweep is $a^kb^\ell$ for some $k,\ell$ such that $k+\ell > C$. During the sweep the machine reads $a^{k'}b^{\ell'}$ where $k'+\ell'>C$. This means that either $k'>|Q|$ or $\ell'>|Q|$. Without loss of generality we can assume $k'>|Q|$. This gives us that while reading $a^{k'}$ the automaton must visit some state $\st{p}$ at least twice while reading only $a$'s, so we get that $\st{p}a^r\Ra^* \st{p}$ for some $r>0$. But then, by a usual pumping argument the machine also needs to accept $a^{n+r}b^n\notin L_{ab}$ contradicting our assumption that $L(\mathcal{F})=L_{ab}$ and concluding the proof.\qed
\end{proof}

\begin{theorem}
    For any $f:\N\Ra \N$ with $f(n)\in o(n)$ we have $\swe(f(n))\subsetneq \swe(n)$.
\end{theorem}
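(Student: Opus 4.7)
The plan is to use the language $L_{ab}=\{w\in\{a,b\}^* \mid |w|_a = |w|_b\}$ as the separating witness, leaning entirely on the work already done in Lemma~\ref{thm:ablin}.

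For the containment $\swe(f(n))\subseteq \swe(n)$, I would simply note that $f(n)\in o(n)$ implies $\bigo(f(n))\subseteq \bigo(n)$, so any ROWJFA witnessing membership in the left-hand class also witnesses membership in the right-hand one.

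For strictness, recall that the two-state ROWJFA $\mathcal{A}$ in Fig.~\ref{fig:two} accepts $L_{ab}$, and since every ROWJFA has sweep complexity $\bigo(n)$, this yields $L_{ab}\in \swe(n)$. Conversely, by Lemma~\ref{thm:ablin}, every ROWJFA accepting $L_{ab}$ has sweep complexity $\Theta(n)$, in particular $\Omega(n)$. Any function in $\bigo(f(n))$ with $f(n)\in o(n)$ is itself $o(n)$ and therefore not $\Omega(n)$, so no machine accepting $L_{ab}$ can have complexity $\bigo(f(n))$. Hence $L_{ab}\notin \swe(f(n))$, and the inclusion is strict.

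There is no real obstacle here: the combinatorial content is fully contained in Lemma~\ref{thm:ablin}, and the present theorem is essentially a corollary obtained by unpacking the definitions of the asymptotic complexity classes. The only minor care needed is to verify that $\Theta(n)\cap \bigo(f(n))=\emptyset$ whenever $f(n)\in o(n)$, which is immediate from the definitions.
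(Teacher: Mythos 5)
Your proposal is correct and follows essentially the same route as the paper: both use $L_{ab}$ as the separating witness, invoke Lemma~\ref{thm:ablin} to rule out membership in $\swe(f(n))$, and use the two-state automaton $\mathcal{A}$ together with the universal $\bigo(n)$ bound to place $L_{ab}$ in $\swe(n)$. Your write-up merely spells out the routine asymptotic bookkeeping (that $\bigo(f(n))\subseteq\bigo(n)$ and that $\Theta(n)\cap\bigo(f(n))=\emptyset$ for $f(n)\in o(n)$) that the paper leaves implicit.
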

\begin{proof}
    By Lemma~\ref{thm:ablin} we know that $L_{ab}\notin \swe(f(n))$ for any sublinear function $f(n)$. The two-state automaton $\mathcal{A}$ accepts the language with sweep complexity $\Theta(n)$. This is easy to see when considering the worst-case inputs of the form $a^nb^n$ for $n\geq 0$.\qed
\end{proof}

\section{Concluding remarks}
Apart from the complexity considerations listed below we think the proof of Lemma~\ref{lem:B2_non_reg} contains a detail worth emphasizing: the automaton can verify a logarithmic/exponential relation between two factors of suitably chosen inputs! We found this very surprising since we still basically deal with DFA which cannot store information and cannot `choose' which symbols to read or jump over\footnote{Iterated uniform finite transducers can also verify such relationships, albeit their computing power is much stronger.~\cite{iteratedfst2022}}. 

We presented automata for all pairings of regular and non-regular languages with logarithmic and linear worst case sweep complexity. This way we disproved the conjecture on the constant sweep requirement for regularity~\cite{NCMA16} and answered several questions regarding sweep complexity posed in~\cite{FMW22}:
\begin{enumerate}
    \item\label{q1} Is the language of each machine with $\omega(1)$ complexity non-regular? NO, by Section~\ref{sec:logregular}.
    \item\label{q2} Is there a machine with sweep complexity between constant and linear, that is, $\omega(1)$ and $o(n)$? YES, by Theorem~\ref{prop:B_linear} (and Lemma~\ref{lem:B2_log}).
    \item\label{q3} Is there a \emph{language} with sweep complexity between constant and linear, that is, all machines accepting it have superconstant complexity and at least one has sublinear? YES, by Theorem~\ref{thm:logseparate}.
    \item\label{q4} Is there an upper bound in terms of sweep complexity on machines accepting regular languages? NO, by Propositions~\ref{prop:C_linear} and~\ref{thm:C_reg}.
    \item\label{q6} Are machines less complex in the case of binary alphabets, given that the complementary deficient pairs of Lemma~\ref{nonconstantsweeps} are predetermined? NO, illustrated by the fact that all results have been obtained over a binary alphabet.
\end{enumerate}

These coarser forms of Questions~\ref{q2} and~\ref{q3} have been answered here, but for a complete picture one would want to know whether there exist machines with arbitrary (constructible) sublinear complexity and its equivalent for languages. The most obvious choices for such a study would probably be complexities $\Theta(\log^k n)$ and $\Theta(n^\epsilon)$, for constants $k>1$ and $\epsilon<1$. Another angle related to Question~\ref{q4} is to study the lower bound of non-regularity: logarithmic complexity can produce non-regular languages, but can we do it with less of this `non-regular' resource?
In the case of Question~\ref{q6}, our answer may be refined, as there may by some sublinear $f(n)$ such that the machines of $\Theta(f(n))$ complexity all accept regular or all accept non-regular languages, although we have not seen anything that indicates such perplexing behaviour.

Another interesting direction relates to our original motivation in looking at the complexity of these automata, deciding regularity. The question more generally becomes, is it decidable given a machine or language and a function $f(n)$, whether the machine/language has $\Theta(f(n))$ complexity (or its one-sided variants with $\mathcal{O}$ and $\Omega$)? We suspect that the answer is yes at least in the case of constant and linear functions but have no idea about the logarithmic and more complicated cases.

\bibliographystyle{splncs04}
\bibliography{rev_onewayjumping}

\end{document}